\definecolor{myurlcolor}{rgb}{0,0,0.7}
\definecolor{myrefcolor}{rgb}{0.8,0,0}
\newcommand{\R}{\mathbb{R}}
\newcommand{\N}{\mathbb{N}}
\newcommand{\C}{\mathbb{C}}
\newcommand{\Z}{\mathbb{Z}}
\newcommand{\B}{\mathcal{B}}
\renewcommand{\H}{\mathcal{H}}
\newtheorem{prop}{Proposition}[section]
\newtheorem{cor}[prop]{Corollary}
\newtheorem{definition}[prop]{Definition}
\newtheorem{defn}[prop]{Definition}
\newtheorem{lem}[prop]{Lemma}
\theoremstyle{definition} 
\newtheorem{ex}[prop]{Example}
\newtheorem{prob}[prop]{Problem}
\newtheorem{rem}[prop]{Remark}
\newcommand{\beq}{\begin{equation}}
\newcommand{\eeq}{\end{equation}}
\newcommand{\bea}[1]{\begin{equation}\begin{array}{#1}}
\newcommand{\eea}{\end{array}\end{equation}}
\newcommand{\beqn}{\begin{eqnarray}}
\newcommand{\eeqn}{\end{eqnarray}}
\renewcommand{\rho}{\varrho}
\begin{document}

\title[Entropic Inequalities]{Entropic Inequalities and Marginal Problems}

\author{Tobias Fritz}
\address{ICFO--Institut de Ci\`encies Fot\`oniques,
Mediterranean Technology Park, 08860 Castelldefels (Barcelona),
Spain, and Perimeter Institute for Theoretical Physics, Waterloo, Ontario, Canada}
\email{tfritz@perimeterinstitute.ca}
\author{Rafael Chaves}
\address{ICFO--Institut de Ci\`encies Fot\`oniques,
Mediteranean Technology Park, 08860 Castelldefels (Barcelona), Spain, and Institute for Physics, University of Freiburg, Rheinstrasse 10, D-79104 Freiburg, Germany}
\email{rafael.chaves@icfo.es}

\thanks{We would like to thank Daniel Polani for helpful correspondence and Jonatan Bohr Brask for help with part of the \texttt{MATHEMATICA} code.
This work is supported by the EU STREP QCS, the QESSENCE project, the German Science Foundation (grant CH 843/2–1), the
Excellence Initiative of the German Federal and State Governments (grant ZUK 43) and by Perimeter Institute for Theoretical Physics. Research at Perimeter Institute is supported by the Government of Canada through Industry Canada and by the Province of Ontario through the Ministry of Economic Development and Innovation.}

\keywords{marginal problem; entropic inequalities; polymatroids; quantum contextuality}

\begin{abstract}
A marginal problem asks whether a given family of marginal distributions for some set of random variables arises from some joint distribution of these variables. Here we point out that the existence of such a joint distribution imposes non-trivial conditions already on the level of Shannon entropies of the given marginals. These entropic inequalities are necessary (but not sufficient) criteria for the existence of a joint distribution. For every marginal problem, a list of such Shannon-type entropic inequalities can be calculated by Fourier-Motzkin elimination, and we offer a software interface to a Fourier-Motzkin solver for doing so. For the case that the hypergraph of given marginals is a cycle graph, we provide a complete analytic solution to the problem of classifying all relevant entropic inequalities, and use this result to bound the decay of correlations in stochastic processes. Furthermore, we show that Shannon-type inequalities for differential entropies are not relevant for continuous-variable marginal problems; non-Shannon-type inequalities are, both in the discrete and in the continuous case. In contrast to other approaches, our general framework easily adapts to situations where one has additional (conditional) independence requirements on the joint distribution, as in the case of graphical models. We end with a list of open problems.

A complementary article discusses applications to quantum nonlocality and contextuality.
\end{abstract}

\maketitle
\section{Introduction}
\label{introduction}

This work concerns two lines of research which we would like to introduce separately and relate to each other afterwards.

\subsection*{Marginal problems.} Imagine you have three coins $A_1$, $A_2$ and $A_3$. However, for some physical reason, you can only flip two of them at a time. Upon flipping $A_1$ and $A_2$ together, you find that these two coins always give the same outcome: two heads occur with a relative frequency of $\tfrac{1}{2}$ and two tails occur with a relative frequency of $\tfrac{1}{2}$. Upon flipping $A_2$ and $A_3$ together, the same behavior ensues. However upon flipping $A_1$ and $A_3$ together, you find the exactly opposite result, so that the two outcomes are never identical: the two combinations of one head and one tail occur with probability $\tfrac{1}{2}$ each.

Now what will happen when you only flip a single coin? Clearly, all three pairwise combinations are consistent in the sense that they predict each coin to yield heads and tails with relative frequency $\tfrac{1}{2}$ each. Therefore, this has to be the resulting outcome distribution of flipping only one coin by itself.

But what would happen if you were able to flip all three coins at once? Since $A_1$ and $A_2$ are perfectly correlated, and also $A_2$ and $A_3$ are perfectly correlated, it follows that $A_1$ and $A_3$ should also be perfectly correlated. This contradicts the observation that $A_1$ and $A_3$ are perfectly anticorrelated. Therefore, no three-variable joint distribution compatible with the given marginals exists, due to the transitivity of perfect correlation. Although the two-coin outcome distributions give consistent single-coin distributions, no three-coin distribution is compatible with the given data! Hence either there is some systematic error in the coin flips, or the coins are operated by some weird mechanism creating the required distribution as a function depending on which two are flipped together.

This is the simplest non-trivial example of a \emph{marginal problem}: given a list of joint distributions of certain subsets of random variables $A_1,\ldots,A_n$, is it possible to find a joint distribution for all these variables, such that this distribution marginalizes to the given ones? One obvious necessary condition is that for any two of the given distributions which can be marginalized to the same subset of variables, the resulting marginals should be the same. In the ``three coins'' example, we found this to be the case, since the single-coin marginals were unambiguous: each coin by itself is unbiased, and this does not depend on which other coin it is tossed together with. The example also shows that this consistency condition is not sufficient to guarantee the existence of a joint distribution.

Marginal problems naturally arise in several different fields. To us, the most familiar one is ``quantum nonlocality''~\cite{Bell,Fine}, which features close relatives of our unextendability example---with four ``coins'' instead of three and with slightly different given marginal two-coin distributions. In this case, the unextendability has actually been observed experimentally~\cite{Aspect}, bearing witness to the counterintuitive behavior of quantum theory. As part of the endeavor to understand the counterintuitive features of quantum theory, marginal problems have become an active field of research within the foundations of quantum mechanics~\cite{AbramBrand,AQ,CSW,LSW}. Unfortunately, references from this field to the existing mathematical literature on the subject are virtually nonexistent. It has been noticed before~\cite[Sec.~2.2.1.1]{KlyMarg} that this constitutes a ``disturbing example of a split between mathematics and physics''. One of our goals is to ameliorate this situation a bit by pointing out some of the literature on both sides.

Marginal problems have also arisen in the following other fields of mathematical research:

\begin{enumerate}
\item knowledge integration of expert systems in artificial intelligence~\cite{Vomlel},
\item database theory and privacy aspects of databases~\cite{Ab,Chowdhury,Dwork},
\item Vorob'ev's theory of \emph{coalition games}~\cite{Vorob2}.
\end{enumerate}

We will present a more detailed exposition of how marginal problems arise in these contexts in section~\ref{margprob}.

The origin of this subject can be traced back to at least 1955, when Bass~\cite{Bass} has considered the case of three continuous variables with given two-variable marginals. Other early works include~\cite{DallA},~\cite{Kellerer} and~\cite{Vorob}. A more abstract and general formulation in terms of $\sigma$-algebras can be found e.g.~in~\cite{HJ}. Our ``three coins'' example appears in most papers treating marginal problems~\cite[\ldots]{AbramBrand,Vorob}, sometimes more prosaically phrased~\cite{LSW,Specker}. Some further randomly selected references studying marginal problems are~\cite{AP,Malvestuto,Studeny}. Also various quantum versions of marginal problems have been considered, see e.g.~\cite{QMP}.

\subsection*{Entropic inequalities}
The most central concept in information theory is that of \emph{Shannon entropy} and its siblings like conditional entropy, mutual information and relative entropy. Its importance manifests itself not only in the widespread use of Shannon entropy within information theory itself~\cite{CG,YeungBook}, but also in related fields like biodiversity studies~\cite{Krebs}, Bayesian statistics~\cite{Jaynes}, research on collective social behavior~\cite{SP}, or additive combinatorics~\cite{Tao}. The proofs of theorems in information theory often rely on inequalities between Shannon entropies and/or derived quantities. Therefore, it is of fundamental importance to understand all the inequalities which hold between entropies of certain collections of random variables. The so-called \emph{Shannon-type inequalities}~\cite[Ch.~13]{YeungBook} are the most frequently used kind of entropic inequalities. This is the class of all those linear inequalities which can be derived from the \emph{basic inequalities}
$$
H(X) \geq 0 ,\qquad H(X|Y) = H(XY) - H(Y) \geq 0,
$$$$
I(X:Y) = H(X) + H(Y) - H(XY) \geq 0 ,
$$$$
I(X:Y|Z) = H(XZ) + H(YZ) - H(XYZ) - H(Z) \geq 0.
$$
where each symbol $X$, $Y$, $Z$ stands for a random variable or collection of random variables. These basic inequalities express non-negativity of Shannon entropy $H(X)$, conditional entropy $H(X|Y)$, mutual information $I(X:Y)$ and conditional mutual information $I(X:Y|Z)$. Many commonly used information-theoretic inequalities are Shannon-type inequalities; see e.g.~\cite{MT} and references therein for a rather general class of such inequalities and their applications.

A linear programming framework for Shannon-type entropic inequalities has been introduced in~\cite{Yeung}, including the software packakge \texttt{ITIP} which determines whether a given linear entropic inequality is a valid Shannon-type inequality or not. Further progress has been made in~\cite{Yeung3}, where it was shown that not all valid linear inequalities among entropic quantities are Shannon-type inequalities.

Occurences of entropic inequalities outside of information theory itself include applications to group theory~\cite{CY} and to Kolmogorov complexity~\cite{KC}. In this paper, we consider an application of entropic inequalities which was originally introduced, in a less general context, by Braunstein and Caves in~\cite{CHSHentropic}. In our terminology and notation, they were working with the Shannon-type entropic inequality
$$
H(A_1A_4) + H(A_2) + H(A_3) \leq H(A_1A_2) + H(A_2A_3) + H(A_3A_4) ,
\label{bcineq}
$$
which is valid for all joint distributions of the four variables. They found that this inequality can be violated by using a ``four coin'' example similar to the one above: the relevant joint distributions of variable pairs are known, so that their entropies are well-defined, and the inequality can be evaluated. The resulting violation witnesses that there cannot exist any joint distribution compatible with the given two-variable marginals. In this sense, entropic inequalities give necessary conditions for the existence of solutions to marginal problems. See proposition~\ref{eidc}.

An important advantage of the application of entropic inequalities to marginal problems is that they apply irrespectively of the number of outcomes of each variable. On the negative side, entropic inequalities are only a \emph{sufficient} criterion for the existence of a solution to a marginal problem: many marginal problems have no solution, although no violations of corresponding Shannon-type entropic inequalities exist. For a more detailed discussion of these issues, we refer to our companion paper~\cite{ent_approach}.

\subsection{Our contributions and structure of this paper.}

We start in section~\ref{margprob} by introducing marginal problems in more detail in order to set up terminology and notation. The ``three coins'' reappear as example~\ref{triangle1}.

In section~\ref{polymatroids}, we use polymatroids as a rigorous formalism for the discussion of Shannon-type inequalities and introduce \emph{partial polymatroids}, which represent marginal problems on the level of entropies.

Section~\ref{computational} explains how to use Fourier-Motzkin elimination to compute all Shannon-type entropic inequalities for a given marginal scenario, while possibly taking into account additional (conditional) independence requirements on the joint distribution like in graphical models. We offer a \texttt{MATHEMATICA}~\cite{mathematica} package which generates, given the marginal scenario, the corresponding input for the Fourier-Motzkin solver \texttt{PORTA}~\cite{porta}. Although these computations are very demanding, we have found them to be of some use in our work on quantum nonlocality~\cite{ent_approach}. We also outline an application to causal inference along the lines of~\cite{SA}. 

We analytically solve the partial polymatroid version of marginal problems for the family of $n$-cycle marginal scenarios, $\mathcal{C}_n$ with $n\in\N$, in section~\ref{ncycle}. The resulting inequalities form a single equivalence class under the action of the cyclic symmetry of $\mathcal{C}_n$. Our proof implies that there are no non-Shannon-type inequalities in any $\mathcal{C}_n$. Finally, we use these $n$-cycle inequalities to give a bound on the decay of correlations in stationary stochastic processes.

Shannon-type entropic inequalities for differential entropy are discussed in section~\ref{differential}, where we show them to not give any non-trivial constraints on the existence of solutions to marginal problems for continuous variables. 

Section~\ref{nsti} shows that non-Shannon-type entropic inequalities can be useful for detecting the non-existence of solutions to marginal problems, both for discrete and for continuous variables.

Finally, we conclude in section~\ref{conclusion} with a list of open problems.

\subsection{Notation and conventions.} All our logarithms are with respect to base $2$. In particular, we measure entropy in bits. We take $[n]=\{1,\ldots,n\}$ to be a finite index set and write $2^{[n]}$ for the set of all subsets of $[n]$. With the exception of sections~\ref{differential} and~\ref{nsti}, all random variables occurring in this paper are assumed to be discrete in such a way that their Shannon entropy converges.

\section{Marginal problems for random variables}
\label{margprob}

In this section, we introduce marginal problems as discussed, in different variants, for example in~\cite{AbramBrand,Bass,DallA,HJ,Joe,Kellerer,LSW,Malvestuto,QR,Studeny,Vorob}.

We consider a finite number of random variables $A_1,\ldots,A_n$. For any subset $S\subseteq[n]$, we also write $A_S$ for the tuple $(A_i)_{i\in S}$. In particular, $A_{[n]}=(A_1,\ldots,A_n)$ represents the joint distribution of all variables, and we stipulate $A_{\emptyset}=0$.

In many situations, one knows the distribution of $A_S$ for certain subsets $S\subseteq[n]$, but not the joint distribution of $A_{[n]}$. Sometimes, it is unclear whether a joint distribution even exists; in this case, one deals with a marginal problem.

Now if the distribution of $A_S$ is known for some $S\subseteq [n]$, then taking marginals down to a smaller subset $S'\subset S$ yields the distribution of $A_{S'}$. Therefore, the collection of sets of variables with known distribution is naturally closed under taking subsets. This motivates the following definition:

\begin{definition}[{\cite{Vorob}}]
A \emph{marginal scenario} $\mathcal{M}$ on $[n]$ is a non-empty collection
$\mathcal{M}=\{S_1,\ldots,S_{|\mathcal{M}|}\}$ of subsets $S_i\subseteq [n]$ such that
if $S\in\mathcal{M}$ and $S'\subseteq S$, then also $S'\in\mathcal{M}$.
\end{definition}

In its topological interpretation, such a combinatorial structure is also known as an \emph{abstract simplicial complex}~\cite{Munk}. 

Clearly, a marginal scenario is determined by those subsets $S_i\in\mathcal{M}$ which are not contained in any other $S_j\in\mathcal{M}$; such a subset is \emph{maximal}. In particular, it is sufficient to specify these maximal subsets when defining a particular marginal scenario. This is the approach taken in~\cite[Sec.~2.4]{AbramBrand}, where these maximal subsets are called \emph{measurement contexts}. For any set system $\mathcal{X}\subseteq 2^{[n]}$, we let $\overline{\mathcal{X}}^{\subseteq}$ denote the set system containing $\mathcal{X}$ together with all the subsets of sets in $\mathcal{X}$. It is the marginal scenario generated by $\mathcal{X}$.

We now formalize the idea of specifying a family of compatible marginal distributions for a marginal scenario $\mathcal{M}$. If $P$ is a probability distribution on some set of variables containing $S\subseteq[n]$, then we write $P_{|S}$ for the marginal distribution associated to $A_S$.

\begin{defn}
\label{mm}
A \emph{marginal model} $P^{\mathcal{M}}$ on $\mathcal{M}$ is a collection $(P^{\mathcal{M}}_S)_{S\in\mathcal{M}}$ of probability distributions $P_S^{\mathcal{M}}$ for the variables $A_S$ such that these distributions are compatible: for any pair $S,T\in\mathcal{M}$ with $T\subseteq S$, taking the marginal $P^{\mathcal{M}}_{S|T}$ of the distribution $P_S^{\mathcal{M}}$ over those variables not contained in $T$ yields precisely the given distribution $P_T^{\mathcal{M}}$,
\beq
\label{sheafcond}
P_{S|T}^{\mathcal{M}} = P_T^{\mathcal{M}}.
\eeq
\end{defn}

In particular, this compatibility condition implies that for any triple of subsets $S,S',T\in\mathcal{M}$ with $T\subseteq S,S'$, we have $P_{S|T}^{\mathcal{M}}=P_{S'|T}^{\mathcal{M}}$, as in~\cite{AbramBrand}.

The prime example of a marginal model $P^{\mathcal{M}}$ arises when starting from a joint distribution $P$ and defining the marginal models in terms of its marginals as $P_{S}^{\mathcal{M}} = P_{|S}$. However, we will see that not all marginal models can be constructed in this way. The following terminology follows the literature on quantum contextuality, e.g.~\cite[Thm.~6]{LSW}.

\begin{defn}
$P^{\mathcal{M}}$ is \emph{non-contextual} if there exists a joint distribution $P=P(a_1,\ldots,a_n)$ for all variables $A_1,\ldots,A_n$ such that its marginals coincide with the distributions occurring in the marginal model, i.e.~if $P_S^{\mathcal{M}}=P_{|S}$ for all $S\subseteq [n]$. Otherwise, $P^{\mathcal{M}}$ is \emph{contextual}.
\end{defn}

The idea behind the term ``contextual'' is that although a contextual marginal model allows no joint distribution for all variables in the conventional sense, one can easily find compatible joint distributions which depend on the subset of variables $S\subseteq [n]$. If one does this, then the joint distribution depends on the \emph{context} in which it is probed.

Under certain assumptions on $\mathcal{M}$, every marginal model is non-contextual~\cite{Vorob}. In general, this is not so, with the most elementary example being the ``triangle'':

\begin{ex}[{\cite[\ldots]{LSW,Vorob}}]
\label{triangle1}
We now formalize the ``three coins'' example from the introduction in this language. The corresponding marginal scenario is denoted by $\mathcal{M}=\mathcal{C}_3$ and consists of three variables $A_1, A_2, A_3$ where the three pairwise marginals are assumed to be given, but not the full joint distribution, so that
$$
\mathcal{C}_3 = \overline{\left\{ \{1,2\},\{1,3\},\{2,3\} \right\}}^{\subseteq} .
$$
The three variables take values in the set $\{\textit{heads},\textit{tails}\}$, such that each single variable separately has a uniformly random outcome. The two-variable distributions are
\begin{align}
\begin{split}
\label{trianglebox}
P^{\mathcal{C}_3}_{\{1,2\}}(A_1=a_1,A_2=a_2) &= \left\{ \begin{array}{cl} 1/2 & \textrm{if }\: a_1 = a_2 \\ 0 & \textrm{if }\: a_1\neq a_2 \end{array}\right.\\
P^{\mathcal{C}_3}_{\{2,3\}}(A_2=a_2,A_3=a_3) &= \left\{ \begin{array}{cl} 1/2 & \textrm{if }\: a_2 = a_3 \\ 0 & \textrm{if }\: a_2\neq a_3 \end{array}\right.\\
P^{\mathcal{C}_3}_{\{1,3\}}(A_1=a_1,A_3=a_3) &= \left\{ \begin{array}{cl} 0 & \textrm{if }\: a_1 = a_3 \\ 1/2 & \textrm{if }\: a_1\neq a_3 \end{array}\right..
\end{split}
\end{align}
These determine the single-variable distributions $P^{\mathcal{C}_3}_{ \{ 1 \} }$, $P^{\mathcal{C}_3}_{ \{ 2 \} }$, $P^{\mathcal{C}_3}_{ \{ 3 \} }$ in a consistent way, thereby satisfying the premise of definition~\ref{mm}. We claim that this marginal model is contextual. To see this, let $P$ be a hypothetical joint distribution. Then \emph{any} joint outcome probability
$$
P(A_1=a_1,A_2=a_2,A_3=a_3)
$$
needs to vanish: if $a_1\neq a_2$, this follows from the requirement $P_{|\{1,2\}}{=} P^{\mathcal{C}_3}_{\{1,2\}}$; if $a_2\neq a_3$, it follows from $P_{|\{2,3\}}{=} P^{\mathcal{C}_3}_{\{2,3\}}$; the remaining case is $a_1=a_2=a_3$, and then it is implied by $P_{|\{2,3\}}{=} P^{\mathcal{C}_3}_{\{2,3\}}$. 
\end{ex}

This example demonstrates the existence of contextual marginal models. The question now is the following:

\begin{prob}[{Marginal Problem}]
\label{MP}
How to decide whether a given marginal model $P^{\mathcal{M}}$ in a given marginal scenario $\mathcal{M}$ is non-contextual or contextual?
\end{prob}

\begin{rem}
\label{mmlp}
When the number of outcomes of each variable is finite, this is a linear programming problem: the joint distribution can be identified with its list of outcome probabilities, which are nonnegative real numbers subject to a list of equations (reproduction of the given marginals). See~\cite{AbramBrand} for an explicit formulation of this linear program. However, the number of variables in this linear program is in general exponential in $\mathcal{M}$; when the number of outcomes of each variable is $d$, then it is $d^n$, corresponding to the size of a joint distribution. In fact, certain classes of marginal problems are known to be NP-complete~\cite{pitowsky}.
\end{rem}

The entropic inequalities we are going to study in the following sections are necessary conditions for a marginal model to be non-contextual.

We end this section by discussing how marginal problems, and intimately related issues, arise in various mathematical sciences. This list is certainly not complete, but merely represents our own limited knowledge.

\begin{enumerate}
\item In quantum theory, a physical system is described by a Hilbert space $\H$, to which one associates the $C^*$-algebra $\mathcal{B}(\H)$ of bounded operators. A state is a unit vector $\psi\in\H$, while an observable is a hermitian operator $A=A^*\in\B(\H)$. If $A=\sum_i \lambda_i Q_i$ is the spectral decomposition\footnote{Since we take our random variables to be discrete, we also assume the operator $A$ to be discrete, i.e.~to have pure point spectrum.} of $A$, then the \emph{Born rule} states that the outcome distribution associated to a measurement of $A$ is given by $P(A=\lambda_i) = \langle\psi,Q_i\psi\rangle$. If two or more observables $A_1,\ldots,A_n$ are hermitian operators which commute pairwise, and have spectral decompositions $A_j = \sum_i \lambda_{j,i} Q_{j,i}$, then they are jointly measurable, and their joint distribution is given by
$$
P(A_1=\lambda_{1,i_1},\ldots,A_n=\lambda_{n,i_n}) = \langle \psi, Q_{1,i_1}\ldots Q_{n,i_n} \psi \rangle .
$$
However, if the variables are not pairwise commuting, then they cannot be jointly measured, and their joint outcome distribution is undefined. A marginal scenario $\mathcal{M}$ can then be defined as containing all those subsets $S\subseteq[n]$ for which the associated operators are pairwise commuting. The resulting outcome distributions then define a marginal model on $\mathcal{M}$. As witnessed by the Kochen-Specker theorem~\cite{KStheorem} and by Bell's theorem~\cite{Bell,Fine}, this marginal model is often contextual. This is the essence of \emph{quantum contextuality}. We refer to~\cite{AbramBrand,LSW} and our companion paper~\cite{ent_approach} for more detail and the explanation for why some of these contextual marginal models can be interpreted as \emph{quantum nonlocality}. The latter marginal models---also known as \emph{nonlocal correlations}---have been found to be a useful resource for information processing and communication tasks~\cite{Complexity,random,QKD}.

\item Knowledge integration of expert systems: many artificial intelligence systems aggregate information from several sources. In general, none of these sources will provide perfect information about the state of the world, but only about certain aspects of it; typically, the aspects probed by different sources will overlap. The system then faces the problem of \emph{integrating} the given observations into a consistent picture of the state of the world. Mathematically, this boils down to finding a probability distribution consistent with a given collection of marginals; in this context, a marginal problem asks whether such knowledge integration is possible. For more information on a popular algorithm used for finding a global joint distribution and an analysis of its behavior in the contextual case, we refer to~\cite{Vomlel}.

\item Database theory and privacy aspects of databases: this is best illustrated with an example. A health insurance provider typically has an enormous database of patients which contains, for each patient, a long list of properties like gender, age, diseases, nationality, clinical history, etc. The associated statistics of this data will be of great interest to managers, politicians, researchers in medicine and the general public. However, making the complete database available would compromise the privacy of the patients and is therefore not an option: even after discarding patient names, the database is still likely to contain enough information to make some individual entries be uniquely identifiable with certain persons. Hence there is a balance between the usefulness of the data released and the privacy of the individuals in the database. One approach for achieving such a balance lies in releasing only certain \emph{marginals} of the table~\cite{Ab}: for example, the joint distribution of gender, age, and heart disease prevalence. Given a collection of such marginals, the question is obvious: what do those marginals reveal about the database itself~\cite{Chowdhury}? This is very similar to a marginal problem and we expect some of our methods to also apply in this situation. The question of contextuality of a marginal model reappears as soon as one also adds random components to the marginals before releasing them in order to further increase privacy~\cite{Dwork}.

\item Vorob'ev's theory of coalition games: a coalition game features a finite set of players together with a collection of \emph{coalitions}, where each coalition is a subset of the players. A player may belong to any number of coalitions. Each player has a finite set of \emph{pure strategies} representing his possible actions. A \emph{mixed strategy} is a probability distribution over the set of pure strategies. The strategies chosen by the players do not have to be independent, so that the global strategy of all players is specified by a joint distribution over strategy assignments. Roughly speaking, each coalition specifies a joint mixed strategy for its players. The question then is whether there is a global mixed strategy marginalizing to those specified by the coalitions. This is a marginal problem.

We note that the standard notion of ``coalition game'' is not Vorob'ev's, but rather refers to cooperative game theory~\cite{Shapley2}.
\end{enumerate}

\section{The entropy cone and polymatroids}
\label{polymatroids}

Surprisingly, in some cases the contextuality of a marginal model can be detected already by only looking at the \emph{Shannon entropies} of the given marginal distributions. To our knowledge, this has first been noticed by Braunstein and Caves~\cite{CHSHentropic} in the case of marginal models arising from quantum nonlocality. Before getting to these ideas, we begin by recalling some properties of Shannon entropy.

\subsection{The entropy cone}
\label{subsec:entropic_cone}

Let $A_1,\ldots,A_n$ be random variables with a certain joint distribution. We do not explicitly specify the codomain of these variables, which can be any set; however, we always assume it to be finite or countable in such a way that all the entropies which we consider in the following are well-defined and finite. We write $P(a_1,\ldots,a_n)$, or simply $P$, for their joint distribution over outcome tuples $(a_1,\ldots,a_n)$. 

For any subset of indices $S\subseteq [n]$, we consider the joint Shannon entropy associated to the marginal distribution $P_{|S}$ of $A_S$,
$$
H(A_S) = - \sum_{a_S} P_{|S}(a_S) \log P_{|S}(a_S) .
$$
As a degenerate case, the distribution $P_{|\emptyset}$ is the unique probability distribution on one outcome, and hence its entropy is given by $H(A_{\emptyset})=0$.

The vector $\left(H(A_\emptyset),\ldots,H(A_{[n]})\right)$, where the components range over all of the $H(A_S)$ for $S\subseteq[n]$, is a point in $\R^{2^{[n]}}$. The collection of all points in $\R^{2^{[n]}}$ which arise from probability distributions in this way is difficult to characterize. Its closure is known to be a convex cone~\cite[Thm.~15.5]{YeungBook}, the (closed) \emph{entropy cone} $\overline{\Gamma}^*_n$~\cite{Yeung,YeungBook,Yeung3}. Since any closed convex cone can be described in terms of the linear inequalities which bound it, one may now ask: what is the description of $\overline{\Gamma}^*_n$ in terms of linear inequalities?

There are some obvious linear constraints satisfied by all points in $H(\cdot)\in\overline{\Gamma}^*_n$. For example, $H(A_S)\geq 0$ for every $S\subseteq\{1,\ldots,n\}$, and $H(A_{\emptyset})=0$. More generally, every point in $H(\cdot)\in\overline{\Gamma}^*_n$ satisfies the following \emph{basic inequalities}~\cite{Yeung},
\begin{align}
0 &\leq H(A_S) \qquad \textrm{(with equality if $S=\emptyset$)}\\
H(A_S) & \leq H(A_T) \quad\textrm{if }\: S\subseteq T \\
H(A_{S\cap T}) + H(A_{S\cup T}) &\leq H(A_S) + H(A_T)
\end{align}
for every pair of subsets $S,T\subseteq\{1,\ldots,n\}$. As already stated in the introduction, the second and third inequalities can be regarded as saying that the conditional entropy $H(A_T|A_S)$ and the conditional mutual information $I(A_S:A_T|A_{S\cap T})$ are non-negative.

\begin{defn}
A linear inequality in the $H(A_S)$'s is a \emph{Shannon-type inequality} if it is a non-negative linear combination of the basic inequalities.
\end{defn}

The collection of basic inequalities would be a complete description of $\overline{\Gamma}_n^*$ if all inequalities valid for $\overline{\Gamma}_n^*$ were Shannon-type. However, for $n\geq 4$, this is known not to be the case~\cite{Yeung3},~\cite[Thm.~15.7]{YeungBook}. As far as we know, finding the complete inequality description of $\overline{\Gamma}_n^*$ remains an elusive problem. 

The Shannon-type inequalities are the ones which are most commonly used in information theory. We will also focus on Shannon-type inequalities for the most part.

\subsection{Polymatroids}

We now would like to ask, is it possible to detect the contextuality of a marginal model by looking at the entropies of the given marginals and finding that a Shannon-type inequality is violated? Clearly, in order for such an inequality to be applicable, it should only depend on those $H(A_S)$ for which $S\in\mathcal{M}$, so that the distribution of $A_S$ is given. We would like to talk about the collection of Shannon-type inequalities which can be used in this way. This requires us to not work with the cone $\overline{\Gamma}^*_n$, but rather with the collection of all vectors in $\R^{2^{[n]}}$ which satisfy the basic inequalities. This is the convex cone of \emph{polymatroids}:

\begin{definition}[{\cite{Edmonds}}]
A \emph{polymatroid} is a pair $([n],f)$ where $n\in\N$ and $f$ is the \emph{rank function}, a function $f:2^{[n]}\to\R$ which satisfies the \emph{basic inequalities}
\begin{align}
\label{nonneg}  0 &\leq f(S) \qquad \textrm{(with equality if $S=\emptyset$)}\\
\label{increase}  f(S) & \leq f(T) \quad\textrm{if }\: S\subseteq T \\
\label{submod}  f(S\cap T) + f(S\cup T) &\leq f(S) + f(T)
\end{align}
for all $S,T\subseteq [n]$.
\end{definition}

We will usually identify a polymatroid with its rank function. The reason for introducing polymatroids lies in the fact that they are \emph{defined} via the basic inequalities: upon replacing $f(S)$ by $H(A_S)$, the linear inequalities satisfied by all polymatroids become precisely the Shannon-type entropic inequalities.

If $A_1,\ldots,A_n$ are discrete random variables with a certain joint distribution, then $f:S\mapsto H(A_S)$ is a polymatroid. A polymatroid arising in this way is called \emph{entropic}. Due to the existence of non-Shannon-type inequalities~\cite{Yeung3}, not every polymatroid is entropic. Although there are other important classes of polymatroids, like polymatroids associated to hypergraphs~\cite{VW}, network flows~\cite{Megiddo} with applications to network coding~\cite{Han}, we will always have entropy in mind. Notwithstanding, the results of this and the following two sections apply generally.

The submodularity inequality~(\ref{submod}) is a natural convexity-like condition which can be interpreted as follows. One may think of $[n]$ as a set of possible tasks which can be completed, and of $f(S)$ for $S\subseteq [n]$ as the amount of resources that have to be spent---for example, work---in order to complete all tasks in $S$. Completing the tasks $S\cup T$ is at most as difficult as completing $S$ plus completing $T$, so that $f(S\cup T)\leq f(S) + f(T)$; since having completed a task $i\in S$ may help in completing another task $j\in T$, this inequality will in general be strict. Similar considerations explain~(\ref{submod}), if one applies this argument to the additional cost relative to the tasks $S\cap T$: if the tasks $S\cap T$ are already all done, then the additional cost to complete $S\cup T$ should be less than or equal to the cost to complete $S$ plus the cost to complete $T$. This suggests
$$
f(S\cup T) - f(S\cap T) \leq \left[ f(S) - f(S\cap T) \right] + \left[ f(T) - f(S\cap T) \right] ,
$$
which is~(\ref{submod}).

Since the defining inequalities are linear, the sum of two polymatroids is again a polymatroid; similarly, a positive scalar multiple of a polymatroid is again a polymatroid. Therefore, the set of all polymatroids on $[n]$ is a convex cone denoted by $\Gamma_n\subseteq\R^{2^{[n]}}$. As already noted, we have the inclusion $\overline{\Gamma}_n^*\subseteq \Gamma_n$, which is strict for $n\geq 4$.

\begin{prop}
All basic inequalities follow from the following ones:
\begin{align}
\begin{split}
\label{shannonineqs}
f([n]\setminus\{i\}) &\leq f([n]) \qquad \forall i\in [n], \\
f(R) + f(R\cup\{i,j\}) &\leq f(R\cup \{i\}) + f(R\cup \{j\}) \qquad \forall R\subseteq [n],\: i,j\in [n]\setminus R\textrm{ with }i\neq j, \\
f(\emptyset) &= 0 .
\end{split}
\end{align}
\end{prop}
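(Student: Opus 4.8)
The plan is to show that the ``reduced'' inequalities~(\ref{shannonineqs}) imply all three families~(\ref{nonneg}),~(\ref{increase}),~(\ref{submod}). Since these are the defining inequalities of a polymatroid, and since~(\ref{shannonineqs}) are themselves special cases of them (the first family is a case of~(\ref{increase}) with $S=[n]\setminus\{i\}$, $T=[n]$; the second is a case of~(\ref{submod}) with $S=R\cup\{i\}$, $T=R\cup\{j\}$, so that $S\cap T=R$, $S\cup T=R\cup\{i,j\}$), the two systems will then be equivalent. So the real content is: every instance of monotonicity and of submodularity is a non-negative combination of the instances listed in~(\ref{shannonineqs}) together with $f(\emptyset)=0$.

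\textbf{Step 1: general submodularity from ``elementary'' submodularity.} First I would prove the general submodularity inequality $f(S\cap T)+f(S\cup T)\le f(S)+f(T)$ from the special case where $S$ and $T$ differ from $S\cap T$ by exactly one element each. The standard trick: write $A=S\cap T$, and let $S\setminus A=\{i_1,\dots,i_p\}$, $T\setminus A=\{j_1,\dots,j_q\}$. One builds up $f(S\cup T)-f(S)$ as a telescoping sum of increments $f(B\cup\{j_k\})-f(B)$ where $B$ ranges over sets between $S$ and $S\cup T\setminus\{j_k\}$, and compares each such increment to the corresponding increment $f((B\cap T)\cup\{j_k\})-f(B\cap T)$ ``at the bottom''. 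Each comparison of consecutive increments is exactly an instance of the second family in~(\ref{shannonineqs}) (with $R$ some set, and with the two added elements being $j_k$ and one of the elements of $S\setminus T$ that gets removed in passing from $B$ to $B\cap T$), applied iteratively. Summing these telescoping chains gives the general submodularity inequality as a non-negative combination of elementary ones. This is the classical ``diminishing returns $\Rightarrow$ submodularity'' argument and is where most of the bookkeeping lives.

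\textbf{Step 2: monotonicity from the top-level cases and submodularity.} Given $S\subseteq T$, I want $f(S)\le f(T)$. It suffices to handle $T=S\cup\{i\}$ (one element at a time, then telescope), i.e.\ to show $f(S)\le f(S\cup\{i\})$ for all $S$ and all $i\notin S$. By Step 1 we already have all submodularity inequalities available; in particular submodularity of the pair $S\cup\{i\}$ and $[n]\setminus\{j\}$ for suitable $j$, combined with the first family $f([n]\setminus\{j\})\le f([n])$, lets one ``lift'' the single top-level monotonicity facts down to arbitrary sets. Concretely, one shows $f(S)\le f(S\cup\{i\})$ by downward induction on $|S|$: the base case $S=[n]\setminus\{i\}$ is exactly the first family in~(\ref{shannonineqs}); for the inductive step, pick $j\notin S\cup\{i\}$, apply submodularity to $S\cup\{i\}$ and $S\cup\{j\}$ to get $f(S)+f(S\cup\{i,j\})\le f(S\cup\{i\})+f(S\cup\{j\})$, and use the inductive hypothesis (applied to the larger sets $S\cup\{j\}$ and $S\cup\{i\}$) to cancel, yielding $f(S)\le f(S\cup\{i\})$.

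\textbf{Step 3: non-negativity.} Finally $0\le f(S)$: chain monotonicity from $\emptyset$ up to $S$ to get $f(\emptyset)\le f(S)$, then use $f(\emptyset)=0$. Equality when $S=\emptyset$ is the hypothesis itself. The main obstacle is Step~1 --- getting the telescoping/induction organized cleanly so that every intermediate inequality used is literally one of the listed elementary submodularities with a valid $R$ and distinct $i,j\in[n]\setminus R$; once that is set up, Steps~2 and~3 are short inductions. I would present Step~1 as a lemma (elementary submodularity $\Rightarrow$ full submodularity) and then assemble the proposition from it.
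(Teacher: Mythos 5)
Your proposal is correct: the reduction of general monotonicity and submodularity to the ``elementary'' instances via diminishing returns (telescoping over one-element increments) and downward induction from the top-level inequalities $f([n]\setminus\{i\})\leq f([n])$ is exactly the standard argument. The paper does not spell out a proof at all---it simply cites this as well-known (Yeung's book, Ch.~14)---and your sketch is essentially that standard proof, with all steps sound.
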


\begin{proof}
This result is well-known~\cite[Sec.~14]{YeungBook}.
\end{proof}

\subsection{Marginal problems for polymatroids}

We now define a version of marginal problems which is not about random variables, but about polymatroids. By taking entropies, a marginal problem for random variables can be mapped into a marginal problem for a polymatroid, such that contextuality of the latters implies contextuality of the former (but not conversely, in general); see figure~\ref{diagram}. The following definition introduces the polymatroid analog of a marginal model:

\begin{figure}
\centerline{\xymatrix{
\textrm{joint distribution } P \ar[rrr]^{\textrm{take entropies}}_{H(\cdot)} \ar[dd]|*\txt{take marginals} &&& \textrm{polymatroid } f \ar[dd]|*\txt{ restrict to $\mathcal{M}$} \\\\
\textrm{marginal model } P^{\mathcal{M}} \ar[rrr]^{\textrm{take entropies}}_{H(\cdot)} &&& \textrm{partial polymatroid } f^{\mathcal{M}} 
}}
\caption{Relation between the different concepts discussed in the main text. By definition, a marginal model (resp.~partial polymatroid) is non-contextual if and only if it arises from a vertical arrow.}
\label{diagram}
\end{figure}
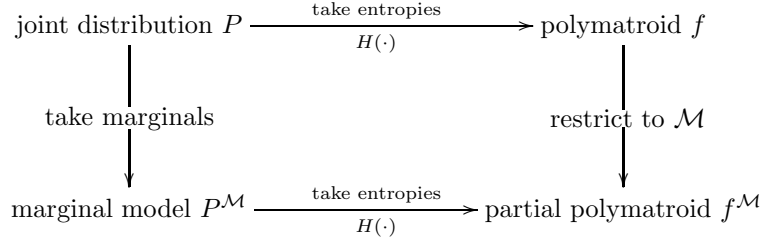

\begin{defn}
\label{defpp}
A \emph{partial polymatroid} $f^\mathcal{M}$ on a marginal scenario $\mathcal{M}$ is a function
$$
f^\mathcal{M} : \mathcal{M} \to \R
$$
which satisfies~(\ref{nonneg}),~(\ref{increase}) and~(\ref{submod}) for all $S,T\subseteq\mathcal{M}$ for which $S\cup T\in\mathcal{M}$.
\end{defn}

Intuitively, requiring the inequalities on $S,T\subseteq\mathcal{M}$ with $S\cup T\in\mathcal{M}$ only is analogous to the compatibility condition in definition~\ref{mm}. 

The most obvious example of a partial polymatroid is the restriction $f_{|\mathcal{M}}$ of a polymatroid $f: 2^{[n]}\to \R$ to $f_{|\mathcal{M}}:\mathcal{M}\to \R$. However, we will see soon that not all partial polymatroids arise in this way. These include some which come from marginal models:

\begin{prop}
Let $\mathcal{M}$ be a marginal scenario and $P^{\mathcal{M}}$ a marginal model on $\mathcal{M}$ for variables $(A_i)_{i\in[n]}$. Then the function
$$
f^{\mathcal{M}} \::\: \mathcal{M}\to \R ,\qquad S\mapsto H(A_S)
$$
is a partial polymatroid on $\mathcal{M}$.
\end{prop}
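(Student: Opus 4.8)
The plan is to verify directly that the three defining inequalities of a partial polymatroid (equations \eqref{nonneg}, \eqref{increase}, \eqref{submod}) hold for the function $S\mapsto H(A_S)$, but only in the restricted sense required by Definition~\ref{defpp}: namely, for those $S,T\in\mathcal{M}$ with $S\cup T\in\mathcal{M}$. The crucial observation is that whenever $S\cup T\in\mathcal{M}$, downward closure of the marginal scenario guarantees that $S$, $T$, $S\cap T$, and indeed every subset of $S\cup T$ also lies in $\mathcal{M}$, so all the entropies appearing in the inequality are among the data of the marginal model $P^{\mathcal{M}}$ and are well-defined finite quantities. Moreover, by the compatibility condition \eqref{sheafcond}, these entropies are exactly the entropies one would compute from the single distribution $P^{\mathcal{M}}_{S\cup T}$ by marginalization; so on the ``patch'' of variables $A_{S\cup T}$ we are in the setting of an honest joint distribution.

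First I would fix $S,T\in\mathcal{M}$ with $U:=S\cup T\in\mathcal{M}$, and set $P:=P^{\mathcal{M}}_{U}$, a genuine probability distribution on the variables $A_U$. By the compatibility condition of Definition~\ref{mm}, for every $V\subseteq U$ we have $P^{\mathcal{M}}_{V}=P_{|V}$ and hence $f^{\mathcal{M}}(V)=H(A_V)$ computed with respect to $P$. Then I would simply invoke the standard facts about Shannon entropy already recalled in Section~\ref{subsec:entropic_cone}: $H(A_V)\geq 0$ with $H(A_\emptyset)=0$ (non-negativity of entropy); $H(A_V)\leq H(A_W)$ for $V\subseteq W\subseteq U$ (monotonicity, i.e.\ non-negativity of conditional entropy $H(A_W|A_V)$); and $H(A_{S\cap T})+H(A_{S\cup T})\leq H(A_S)+H(A_T)$ (submodularity, i.e.\ non-negativity of $I(A_S:A_T|A_{S\cap T})$). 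These are precisely \eqref{nonneg}, \eqref{increase}, \eqref{submod} for the values of $f^{\mathcal{M}}$ in question. Since $S,T$ were arbitrary subject to $S\cup T\in\mathcal{M}$, this is exactly what Definition~\ref{defpp} demands.

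I do not expect any real obstacle here; the proof is essentially bookkeeping. The one point that deserves a sentence of care is the reduction to a single joint distribution: one must note that although a marginal model need not come from a global joint distribution on all of $A_{[n]}$, it does restrict to a consistent joint distribution on any $A_U$ with $U\in\mathcal{M}$ (because downward closure puts all subsets of $U$ into $\mathcal{M}$, and \eqref{sheafcond} makes the $P^{\mathcal{M}}_V$ for $V\subseteq U$ mutually consistent and equal to the marginals of $P^{\mathcal{M}}_U$). Once that is said, the entropic inequalities are literally the basic inequalities for an entropic polymatroid on the variable set $U$, as explained in Section~\ref{polymatroids}. It is perhaps worth remarking explicitly that the same bound is needed in the statement as in the proof of Definition~\ref{defpp}'s hypotheses: the inequalities are only asserted (and only provable) when $S\cup T\in\mathcal{M}$, since otherwise $H(A_{S\cup T})$ is not determined by the marginal model at all.
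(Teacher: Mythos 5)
Your proof is correct and follows the same route as the paper, which simply notes that the claim follows from the basic inequalities together with the compatibility condition~(\ref{sheafcond}); your reduction to the single distribution $P^{\mathcal{M}}_{S\cup T}$ via downward closure is exactly the detail the paper leaves as ``straightforward to check.''
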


\begin{proof}
It is straightforward to check that this satisfies definition~\ref{defpp} by using the basic inequalities~(\ref{nonneg}),~(\ref{increase}),~(\ref{submod}) in combination with the assumption~(\ref{sheafcond}).
\end{proof}

\begin{defn}
A partial polymatroid $f^\mathcal{M}$ is \emph{non-contextual} if there is a polymatroid $f$ such that $f^{\mathcal{M}}(S)=f(S)$ for all $S\in\mathcal{M}$. Otherwise, $f^{\mathcal{M}}$ is \emph{contextual}.
\end{defn}

If a marginal model is non-contextual, then the associated partial polymatroid is clearly non-contextual, too. Hence, showing the contextuality of a polymatroid is one way to detect the contextuality of a marginal model: it gives a sufficient, but in general not necessary, criterion for contextuality of the marginal model. In analogy with problem~\ref{MP}, we therefore consider:

\begin{prob}[{Marginal problem for polymatroids}]
Given a partial polymatroid $f^{\mathcal{M}}$ on $\mathcal{M}$, under which conditions is it non-contextual?
\end{prob}

\begin{ex}
\label{triangle2}
Going back to example~\ref{triangle1}, we again consider the ``triangle'' or ``three coins'' marginal scenario
\beq
\label{triangleM}
\mathcal{C}_3 = \overline{\left\{\{1,2\},\{2,3\},\{1,3\}\right\}}^{\subseteq}.
\eeq
For any polymatroid $f$, the basic inequalities
\begin{align}
\begin{split}
\label{startbasic}
f(\{1,3\}) &\leq f(\{1,2,3\}) \\
f(\{1,2,3\}) + f(\{2\}) &\leq f(\{1,2\}) + f(\{2,3\})
\end{split}
\end{align}
imply directly
\beq
\label{triangleineq}
f(\{1,3\}) + f(\{2\}) \leq f(\{1,2\}) + f(\{2,3\}) \:.
\eeq
For later use, we refer to this inequality as the \emph{triangle inequality}\footnote{Note that it indeed has considerable similarity to the ordinary triangle inequality for a metric, $d(x,z)\leq d(x,y)+d(y,z)$.}. It is an inequality for the values of $f$ on the subsets in $\mathcal{C}_3$. Hence it can also be evaluated on partial polymatroids over $\mathcal{C}_3$, and such a violation witnesses the partial polymatroid's contextuality. For example, this happens for the partial polymatroid defined as
$$
f^{\mathcal{C}_3}(\{1\}) = f^{\mathcal{C}_3}(\{2\}) = f^{\mathcal{C}_3}(\{3\}) = 1 ,\qquad f^{\mathcal{C}_3}(\{1,2\}) = f^{\mathcal{C}_3}(\{2,3\}) = 1,\qquad f^{\mathcal{C}_3}(\{1,3\}) = 2 .
$$
This partial polymatroid can be interpreted as arising from the following marginal model, similar to~(\ref{trianglebox}),
\begin{align}
\begin{split}
\label{trianglePuc}
P_{\{1,2\}}^{\mathcal{C}_3}(A_1=a_1,A_2=a_2) &= \left\{ \begin{array}{cl} 1/2 & \textrm{if }\: a_1 = a_2 \\ 0 & \textrm{if }\: a_1\neq a_2 \end{array}\right.\\
P_{\{2,3\}}^{\mathcal{C}_3}(A_2=a_2,A_3=a_3) &= \left\{ \begin{array}{cl} 1/2 & \textrm{if }\: a_2 = a_3 \\ 0 & \textrm{if }\: a_2\neq a_3 \end{array}\right.\\[.2cm]
P_{\{1,3\}}^{\mathcal{C}_3}(A_1=a_1,A_3=a_3) &= \quad\: 1/4 \quad \forall a_1,a_3 .
\end{split}
\end{align}
As in~(\ref{trianglebox}), $A_1$ and $A_2$ are perfectly correlated, and likewise $A_2$ and $A_3$. But now, $A_1$ and $A_3$ are completely uncorrelated (instead of anticorrelated, as in~(\ref{trianglebox})).

With this definition, every single variable has $1$ bit of entropy, the joint distribution of $A_1$ and $A_2$ has $1$ bit of entropy, likewise for $A_1$ and $A_3$, and the joint distribution of $A_1$ and $A_3$ has $2$ bits of entropy. This realizes the partial polymatroid $f^{\mathcal{C}_3}$. Since the triangle inequality~(\ref{triangleineq}) is violated, there exists no joint distribution for all three variables marginalizing to the given ones, and~(\ref{triangleineq}) witnesses the contextuality of this marginal model.
\end{ex}

\begin{rem}
As in example~\ref{triangle1}, the reason why the marginal model in this example is contextual is that perfect correlation is transitive: if $A_1$ is perfectly correlated with $A_2$, and $A_2$ is perfectly correlated with $A_3$, then $A_1$ should also be perfectly correlated with $A_3$. We regard the triangle inequality~(\ref{triangleineq}) as one quantitative version of this intuition.

However, the entropies associated to the marginal model~(\ref{trianglebox}) do not violate~(\ref{triangleineq}): on the level of entropies, there is no difference between~(\ref{trianglebox}) and the marginal model in which $A_1$ are $A_3$ are also perfectly correlated (instead of anti-correlated), and this latter marginal model is obviously non-contextual. Entropies cannot distinguish between correlation and anti-correlation, and are generally very coarse invariants of probability distributions. From this point of view, it is quite surprising that entropic inequalities can witness the contextuality of some marginal models like~(\ref{trianglePuc}) at all. See also~\cite{ent_approach}.
\end{rem}

\section{Computations and applications}
\label{computational}

\subsection*{Fourier-Motzkin elimination.}

Determining whether a given partial polymatroid $f^{\mathcal{M}}$ is non-contextual means checking whether there exist values $f(S)$ for $S\in[n]\setminus\mathcal{M}$ which extend the given partial polymatroid $f^{\mathcal{M}}$ to a ``full'' polymatroid $f$. This requires values $f(S)$ such that all the basic inequalities~(\ref{shannonineqs}) hold not just on $\mathcal{M}$, but on all of $2^{[n]}$. This is a linear programming problem, as in Yeung's linear programming framework for Shannon-type entropic inequalities~\cite{Yeung}, and therefore can be solved in time polynomial in its size. However, the size of this linear program is $2^n$, which typically grows exponentially in the size of $\mathcal{M}$. This happens, for example, for the family of $n$-cycle scenarios $\mathcal{C}_n$ discussed in section~\ref{ncycle}, where the number of missing values $f(S)$, i.e.~the number of unknowns of the linear program, is $2^n-2n-1$. However, for variables with $d\gg 1$ outcomes, this is still significantly smaller than the $d^n$ of remark~\ref{mmlp}.

Given a polymatroid $f:2^{[n]}\to\R$, one computes the restriction $f_{|\mathcal{M}}:\mathcal{M}\to\R$ by simply forgetting the values $f(S)$ for all $S\in[n]\setminus\mathcal{M}$. Geometrically speaking, this is equivalent to projecting the point $f\in\R^{2^{[n]}}$ down to $\R^{\mathcal{M}}$ by forgetting some of the coordinates. Therefore, the set of non-contextual partial polymatroids on $\mathcal{M}$ is a projection of $\Gamma_n$ along a map $\R^{2^{[n]}}\to\R^\mathcal{M}$ which throws away some of the coordinates. In particular, this set is also a convex cone, and we denote it by $\Gamma^{\mathcal{M}}$. If an inequality description of $\Gamma^{\mathcal{M}}$ is known, then deciding the non-contextuality of a given partial polymatroid is simple: one only needs to check whether it satisfies all the inequalities defining $\Gamma^{\mathcal{M}}$. Therefore, it is very desirable to compute the inequality description of $\Gamma^{\mathcal{M}}$. For the cycle scenarios $\mathcal{M}=\mathcal{C}_n$ to be defind in section~\ref{ncycle}, we will find an analytic solution to this problem.

A natural way to determine such projections $\Gamma^{\mathcal{M}}$ would be calculate the extremal rays of $\Gamma_n$ and drop the irrelevant coordinates of these. The resulting points in $\R^{\mathcal{M}}$ generate the polyhedral cone $\Gamma^{\mathcal{M}}$. However, determining all the extremal rays of the cone $\Gamma_n$ is a very hard problem, with explicit solutions known only for $n\leq 5$~\cite{KK,Shapley1,cone5}. Hence this method is not practical.

A better way of determining $\Gamma^{\mathcal{M}}$ is to start from the inequality description~(\ref{shannonineqs}) of $\Gamma_n$ and then apply Fourier-Motzkin elimination. Fourier-Motzkin elimination~\cite{FM} is a standard method for calculating the inequality description of a projection of a polyhedral cone, given its inequality description. The correctness of the algorithm represents a proof showing that the projected cone is again polyhedral, i.e.~also has a description in terms of a finite number of linear inequalities. Fourier-Motzkin elimination has been implemented in various computational geometry software packages such as \texttt{PORTA}~\cite{porta}.

Since our objective of calculating the inequality description of $\Gamma^{\mathcal{M}}$ is a problem of precisely this form, it is straightforward in principle to apply Fourier-Motzkin elimination in order to achieve this for any given $\mathcal{M}$. For $n\leq 5$, we have successfully used the \texttt{PORTA} software in order to do so for various $\mathcal{M}$. In particular, we have verified the upcoming proposition~\ref{cycletight} for $n\leq 5$. Our \texttt{MATHEMATICA} program for generating a \texttt{PORTA} input file from the specification of $\mathcal{M}$ is available online~\cite{url}.

The contextuality of a marginal model can be detected by Shannon-type entropic inequalities if and only if the associated partial polymatroid lies in $\Gamma^{\mathcal{M}}$. Since Fourier-Motzkin elimination computes all the facet inequalities of $\Gamma^{\mathcal{M}}$, the resulting entropic inequalities are \emph{tight} in the following sense: they detect the contextuality of any marginal model whose contextuality can in principle be detected by Shannon-type entropic inequalities.

\subsection{Including (conditional) independence constraints.}

In certain applications like the one of the following subsection, or in one of those which we have considered in~\cite{ent_approach}, one has additional (conditional) independence constraints on the joint distributions $P(a_1,\ldots,a_n)$ constituting the solution space of a marginal problem. More explicitly, for disjoint subsets $R,S,T\subset [n]$, one might want to allow only those $P$ which satisfy the conditional independence relation that $A_S$ and $A_T$ are conditionally independent given $A_R$ (where $R$ might be empty), which can be written in entropic terms as
\beq
\label{condind}
I(A_S:A_T|A_R) = H(A_{R\cup S}) + H(A_{R\cup T}) - H(A_{R\cup S\cup T}) - H(A_R) \stackrel{!}{=} 0 .
\eeq
In general, one can also have several such constraints at the same time; For ease of presentation, we restrict to one such constraint, but the general case works in exactly the same way.

The set of all entropy assignments $S\mapsto H(A_S)$ for joint distributions satisfying~(\ref{condind}) is a face of the entropy cone $\overline{\Gamma}_n^*$. Again, we approximate $\overline{\Gamma}_n^*$ by the polymatroid cone $\Gamma_n\supseteq\overline{\Gamma}_n^*$, of which the polymatroid analogue of~(\ref{condind}), which is the equation
\beq
\label{pmcondind}
f(R\cup S) + f(R\cup T) - f(R\cup S\cup T) - f(R) \stackrel{!}{=} 0  ,
\eeq
also defines a face. As before, the image of this face of $\Gamma_n$ under the projection $\R^{2^{[n]}}\to\R^{\mathcal{M}}$ defines a cone in $\R^{\mathcal{M}}$, whose inequality description can be computed by Fourier-Motzkin elimination. A partial polymatroid $f^{\mathcal{M}}$ over $\mathcal{M}$ equals the restriction $f_{|\mathcal{M}}$ of a polymatroid $f$ satisfying~(\ref{pmcondind}) if and only if it lies in this cone in $\R^{\mathcal{M}}$.

In conclusion, our method allows the determination of a finite list of tight Shannon-type entropic inequalities for marginal scenarios also in the presence of (conditional) indepedendence constraints. Entropic inequalities seem especially useful to us in this kind of situations, since a (conditional) independence constraint is a linear equation on the level of entropies, so that the linear programming methods and Fourier-Motzkin elimination still apply. On the level of probabilities however, this is no longer the case, since a (conditional) independence constraint is a quadratic equation, resulting in a difficult system of linear inequalities subject to quadratic equations. Due to the relative ease of working on the level of entropies, we see the relevance of our formalism with respect to marginal problems in particular in situations where the marginal problem comes with additional (conditional) independence constraints.

\subsection{Computational results.} Computing projections of cones via Fourier-Motzkin elimination is costly. Using standard Fourier-Motzkin elimination and making use of symmetries to switch between the facet description and the extremal ray description of a polyhedral cone is practical for cone dimensions of up to $\approx 40$ for the highly symmetrical cones arising from combinatorial optimization problems~\cite{Christof}. In our case, the polymatroid cone $\Gamma_n$ has dimension $2^{[n]}-1$, so that we expect $n=5$ to be the highest number of variables for which one can calculate the facets of any interesting $\Gamma^{\mathcal{M}}$ with current methods. We have described one such successful application, to a marginal problem with additional independence constraints, in~\cite{ent_approach}, and now turn to another application for which our computations have unfortunately not terminated.

We also have not been able to terminate any attempted calculation of any $\Gamma^{\mathcal{M}}$ for $n\geq 6$ with those $\mathcal{M}$ in which we were interested. Due to this high computational complexity, analytical results like proposition~\ref{cycletight} are highly relevant also for practical computations using our approach.

\subsection{Example application: inference of common ancestors in Bayesian networks.} This subsection is based on~\cite{SA}, where Steudel and Ay derive entropic inequalities for a certain kind of causal inference. We outline now how our systematic approach to entropic inequalities could in principle extend their results.

A Bayesian network is a mathematical model for the causal dependencies between random variables. We restrict to a brief discussion and refer to~\cite{LGM} for more detail. One of the several equivalent definition is this:

\begin{defn}
Let $G=(V,E)$ be an acyclic directed graph.
\begin{enumerate}
\item For $v\in V$, the set of \emph{descendants} is $\mathrm{de}(v) = \{w\in V \:|\: (v,w)\in E\}$; the set of \emph{parents} is $\mathrm{pa}(v) = \{w\in V\:|\: (w,v)\in E\}$.
\item A \emph{Bayesian network} over $G$ consists of a discrete random variable $A_v$ for every $v\in V$, so that the $A_{v}$ have a joint distribution which satisfies the \emph{local Markov property}: $A_v$ is conditionally independent of $A_{V\setminus\mathrm{de}(v)}$ given $A_{\mathrm{pa}(v)}$, or, equivalently, the corresponding conditional mutual information vanishes,
\beq
\label{lmc}
I(A_v : A_{V\setminus\mathrm{de}(v)} | A_{\mathrm{pa}(v)} ) = 0 \quad \forall v\in V.
\eeq
\end{enumerate}
\end{defn}

Intuitively, the edges $E$ model the causal dependencies between the variables $A_v$: every $A_v$ can be regarded as a probabilistic function of $A_{\mathrm{pa}(V)}$, and this function is independent of everything else, so that no other causal dependencies exist. The simplest nontrivial example of a Bayesian network is a Markov chain $\xymatrix{A_1\ar[r] & A_2\ar[r] & A_3}$.

We now imagine a situation in which only a certain subset of variables $A_i$ for $i\in M\subseteq[n]$ can be accessed, so that their joint distribution is known; the $A_i$ for $i\not\in M$ on the other hand are ``hidden variables'' which mediate the correlations between the accessible variables via the topology of the Bayesian network, but their distribution cannot be determined. The question then is,

\begin{prob}
What can be said about the topology of the network given only the joint distribution of the accessible variables $(A_i)_{i\in M}$?
\end{prob}

Entropic inequalities give necessary conditions for a certain distribution of the $(A_i)_{i\in M}$ to come from a certain network topology. In our framework, these are the entropic inequalities corresponding to the marginal scenario $\mathcal{M}=2^M\subseteq 2^{[n]}$ with additional independence requirements given by the local Markov property~(\ref{lmc}). As explained earlier, these can be calculated by the familiar Fourier-Motzkin elimination algorithm, at least in principle.

The local Markov property~(\ref{lmc}) implies other conditional independence relations for the given variables, the \emph{global Markov conditions}~\cite{LGM}. As conditional independence relations, these are linear equations for the entropies. As such, they can be used to eliminate many of the joint entropies $H(A_S)$ for $S\subseteq 2^V$ before starting the Fourier-Motzkin elimination algorithm. Therefore, for concrete calculations of entropic inequalities for Bayesian networks, it is useful to include all global Markov conditions explicitly in order to speed up the computation, although all of these equations are implied by the local Markov property.

\begin{ex}
Consider the network topology displayed in figure~\ref{baynet}. In this case, the local Markov conditions are the following:
\begin{align}
\begin{split}
I(A_2:A_4) = 0 ,\qquad I(A_3:A_6|A_2A_4) = 0, \\
I(A_4:A_6) = 0 ,\qquad I(A_5:A_2|A_4A_6) = 0, \\
I(A_6:A_2) = 0 ,\qquad I(A_1:A_4|A_6A_2) = 0.
\end{split}
\end{align}
As indicated in figure~\ref{baynet}, we assume that the variables $A_1$, $A_3$ and $A_5$ are accessible, while $A_2$, $A_4$ and $A_6$ are hidden. The result of~\cite[Thm.~10]{SA} in this very particular special case is that this network topology implies the inequality
\beq
\label{commoninfo}
2 H(A_1 A_3 A_5) \geq H(A_1) + H(A_3) + H(A_5) .
\eeq
A slightly better condition has been derived in~\cite{FW}, which is
\beq
\label{commoninfo2}
H(A_1 A_3) + H(A_3A_5) \geq H(A_1) + H(A_3) + H(A_5) .
\eeq
This inequality actually represents a class of three inequalities equivalent to each other under cyclic permutations of the variables.

We have attempted to use Fourier-Motzkin elimination in order to calculate all (Shannon-type) entropic inequalities in this scenario and see whether this latter class of inequalities is optimal and whether there are other entropic conditions besides this one, but unfortunately our computation has not terminated. 

The interpretation of these inequalities is as follows. If some three observed variables $A_1$, $A_3$, $A_5$ arise from some causal structure in which there is no quantity influencing all three of them, then they can be modelled in terms of a network topology in which at most every pair has a common ancestor. Since, for each pair, this network of common ancestors is hidden, it can as well be subsumed into a single common ancestor variable. This gives rise to the network topology of figure~\ref{baynet}. On the other hand, if the three variables are influenced by some common variable, then their joint distribution cannot arise from a network topology as in figure~\ref{baynet}. One way to witness this is by violations of inequalities~(\ref{commoninfo}) or~(\ref{commoninfo2}); again, such a violation is a sufficient, but not a necessary condition for this kind of causal inference. The most drastic example of a violation of these inequalities occurs when all three variables are identically distributed and perfectly correlated.
\end{ex}

\begin{figure}
\begin{tikzpicture}[every node/.style=draw]
\node[circle=1pt] (A1) at (0,0) {$A_1$} ;
\node (A6) at (2,0) {$A_2$} ;
\node[circle](A2) at (3,1.73) {$A_3$} ;
\node(A4) at (2,3.46) {$A_4$} ;
\node[circle] (A3) at (0,3.46) {$A_5$} ;
\node(A5) at (-1,1.73) {$A_6$} ;
\path[->,thick] (A6) edge (A1) ;
\path[->,thick] (A6) edge (A2) ;
\path[->,thick] (A4) edge (A2) ;
\path[->,thick] (A4) edge (A3) ;
\path[->,thick] (A5) edge (A3) ;
\path[->,thick] (A5) edge (A1) ;
\end{tikzpicture}
\caption{Example Bayesian network modeling the dependency relations between six random variables. A circle represents an accessible variable, while a square stands for a hidden variable.}
\label{baynet}
\end{figure}
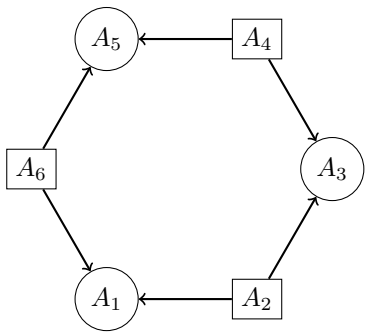

\section{Contextuality in the $n$-cycle marginal scenario}
\label{ncycle}

We now consider a family of marginal scenarios generalizing example~\ref{triangle2}. Already Vorob'ev~\cite{Vorob} (see also~\cite[Sec.~III]{LSW}) has considered the case where the marginal scenario $\mathcal{M}$ is taken to be the \emph{$n$-cycle}
\beq
\label{cyclems}
\mathcal{C}_n = \overline{\left\{ \{1,2\},\ldots,\{n-1,n\},\{n,1\} \right\}}^{\subseteq} .
\eeq
This generalizes~(\ref{triangleM}). Much more recently, the $5$-cycle has also been considered in relation to quantum contextuality~\cite{KSKlyachko}. A complete characterization of (non-)contextuality of marginal models with binary variables on $\mathcal{C}_n$ has been given in~\cite{AQB}.

In order to have somewhat more convenient notation, we regard all $i\in\N$ modulo $n$ as representatives of the elements of $[n]=\{1,\ldots,n\}$. In particular, $n+1$ and $1$ stand for the same element of $[n]$, so that we can write
$$
\mathcal{C}_n = \overline{\left\{ \{1,2\},\ldots,\{n,n+1\} \right\}}^{\subseteq} ,
$$
which will turn out to be a more useful notation for the proof below.

\begin{prop}
\label{cycletight}
Let $n\geq 3$. A partial polymatroid $f^{\mathcal{C}_n}$ is non-contextual if and only if the inequalities
\beq
\label{cycle}
f^{\mathcal{C}_n}(\{i,i+1\}) \: + \sum_{j\neq\, i,\,i+1} f^{\mathcal{C}_n}(\{j\}) \: \leq \: \sum_{j\neq i} f^{\mathcal{C}_n}(\{j,j+1\}) \qquad \forall i=1,\ldots,n
\eeq
hold.
\end{prop}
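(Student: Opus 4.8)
We treat the two implications separately.

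\textbf{Necessity.} Suppose $f^{\mathcal{C}_n}$ is non-contextual, i.e.\ the restriction of a polymatroid $f:2^{[n]}\to\R$. By the cyclic symmetry of $\mathcal{C}_n$ we may fix $i=n$; the derivation is then a ``chain rule'' argument generalizing the triangle inequality~(\ref{triangleineq}). Put $S_k=\{1,\ldots,k\}$. Monotonicity gives $f(\{n,1\})\leq f(S_n)$, and submodularity for the pair $S_{k-1}$, $\{k-1,k\}$ (whose intersection is $\{k-1\}$ and whose union is $S_k$) gives $f(S_k)-f(S_{k-1})\leq f(\{k-1,k\})-f(\{k-1\})$ for $k=2,\ldots,n$. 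Summing these and adding $f(S_1)=f(\{1\})$ telescopes to
\[
f(\{n,1\})\ \leq\ f(S_n)\ \leq\ \sum_{j=1}^{n-1}f(\{j,j+1\})\ -\ \sum_{j=2}^{n-1}f(\{j\}),
\]
which is precisely~(\ref{cycle}) for $i=n$. Every step used only basic inequalities, so all $n$ inequalities~(\ref{cycle}) are Shannon-type.

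\textbf{Sufficiency.} This is the substantial direction: given a partial polymatroid $f^{\mathcal{C}_n}$ obeying all of~(\ref{cycle}), we must extend it to a polymatroid on $2^{[n]}$. Write $a_j=f^{\mathcal{C}_n}(\{j\})$ and $b_j=f^{\mathcal{C}_n}(\{j,j+1\})$; the partial-polymatroid axioms read $0\leq a_j$ and $\max(a_j,a_{j+1})\leq b_j\leq a_j+a_{j+1}$. One is tempted to build an extension by running a ``Markov chain along each arc'' of a subset and treating distinct arcs as independent, but this is false: in a chain $A_1\to A_2\to A_3$ the variables $A_1$ and $A_3$ are correlated through $A_2$, so such a formula would violate monotonicity. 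The plan instead is to exhibit the \emph{largest} polymatroid extension, whose value on a subset $S$ can be written explicitly as the minimum, over all coverings of $S$ by sub-paths of the $n$-cycle, of the corresponding upper bound forced by~(\ref{shannonineqs}); equivalently, one realizes the path $A_1\to A_2\to\cdots\to A_n$ by an honest Markov chain reproducing the singleton and path-edge marginals (possible up to closure, since the path scenario carries no Vorob'ev obstruction) and then tunes its transition kernels so that $H(A_1A_n)$ attains the prescribed value $b_n$. It remains to verify that the resulting $f$ satisfies the reduced list of basic inequalities~(\ref{shannonineqs}).

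The verification splits cleanly. For the inequalities $f(R)+f(R\cup\{i,j\})\leq f(R\cup\{i\})+f(R\cup\{j\})$ with $R\cup\{i,j\}\neq[n]$, a short case distinction --- a new element either starts a singleton arc, lengthens an arc at one end, or fuses two arcs --- shows each one collapses to a single instance of $a_j\leq b_j$ or $b_j\leq a_j+a_{j+1}$, so~(\ref{cycle}) plays no role here. The $n$ conditions $f([n]\setminus\{i\})\leq f([n])$ are exactly where~(\ref{cycle}) enters: each $f([n]\setminus\{i\})$ and the quantity $f([n])$ are explicit affine functions of the $a_j,b_j$, and demanding $f([n]\setminus\{i\})\leq f([n])$ for all $i$ unwinds to a finite list of linear inequalities, most of which again reduce to the partial-polymatroid axioms, the genuinely new ones being precisely~(\ref{cycle}) up to the relabelling that implements the cyclic symmetry of $\mathcal{C}_n$. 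I expect this matching --- that feasibility of the last edge value $b_n$, equivalently validity of the monotonicity inequalities for $f([n])$, is governed by exactly the system~(\ref{cycle}) and nothing stronger or weaker --- to be the main obstacle; it is the only step that uses the combinatorics special to the $n$-cycle, everything else being routine bookkeeping with the partial-polymatroid axioms. Finally, since the construction realizes $f^{\mathcal{C}_n}$ (up to closure) by genuine random variables, the closure of the set of entropic partial polymatroids on $\mathcal{C}_n$ coincides with the cone cut out by~(\ref{cycle}); as those inequalities are Shannon-type, no non-Shannon-type inequalities are relevant for $\mathcal{C}_n$.
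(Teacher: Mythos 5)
Your necessity argument is correct and is essentially the paper's own (the paper phrases it as an induction that adds one triangle inequality~(\ref{triangleineq}) per step; your telescoping sum of the submodularity instances for $S_{k-1}$ and $\{k-1,k\}$ plus one monotonicity step is the same computation written at once). The problem is the sufficiency direction, which in your write-up is a plan rather than a proof. You never actually define the extension: the ``minimum over coverings of $S$ by sub-paths'' functional is not written down, and its monotonicity and submodularity are not verified (your alternative route, realizing the path by a Markov chain and ``tuning'' the last edge entropy to the prescribed value $b_n$, is a construction of random variables with prescribed entropies, which is a different and harder task than extending a partial polymatroid and is not carried out either). Moreover, your claimed ``clean split'' of the verification is incomplete even as a plan: the reduced list~(\ref{shannonineqs}) contains, besides the $n$ monotonicity conditions $f([n]\setminus\{i\})\leq f([n])$, the submodularity instances with $R\cup\{i,j\}=[n]$, which also involve $f([n])$ and hence cannot be dispatched by the ``short case distinction'' you reserve for subsets $\neq[n]$. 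Finally, the step you yourself flag as ``the main obstacle'' --- that the feasibility conditions at the top level unwind to exactly the system~(\ref{cycle}) --- is precisely the content of the proposition, and it is left as an expectation. So the hard half of the statement is not proved, and consequently the closing claim about the absence of non-Shannon-type inequalities (which rests on the unbuilt random-variable realization) is also unsupported.

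For contrast, the paper avoids writing any explicit extension formula. It observes that~(\ref{cycle}) cuts out a polyhedral cone containing $\Gamma^{\mathcal{C}_n}$, so it suffices to show that every extremal ray of that cone is non-contextual; since the inequalities have integer coefficients, one may assume $f^{\mathcal{C}_n}$ is integer-valued, and then one inducts on the total rank $\sum_i f(\{i\})+\sum_i f(\{i,i+1\})$. The induction step subtracts an explicit $\{0,1\}$-valued polymatroid $g$ supported on an arc of the cycle (the entropy function of perfectly correlated bits on that arc), chosen by a four-case analysis according to where the equalities $f(\{i,i+1\})=f(\{i\})+f(\{i+1\})$ or $f(\{j\})=0$ occur, and checks that $f-g_{|\mathcal{C}_n}$ still satisfies~(\ref{cycle}) and the partial-polymatroid axioms. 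This ``peeling off entropic building blocks'' argument both proves extendability and, because each $g$ is entropic, immediately yields corollary~\ref{nononShannon}. If you want to salvage your route, you would have to state the covering formula precisely, prove its submodularity and monotonicity on all of $2^{[n]}$ (including the instances involving $[n]$), and show that the only genuinely new linear conditions produced are the cyclic images of~(\ref{cycle}); none of that is routine bookkeeping.
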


Note that all these inequalities are equivalent to each other via cyclic permutations of $[n]$. In different form, these inequalities have also been derived in~\cite{CHSHentropic}.

\begin{proof}
For ease of notation, we drop the superscript and write $f$ instead of $f^{\mathcal{C}_n}$.

If $f$ is non-contextual, we take it to be the restriction of a full polymatroid, also denoted by $f$. We now show that~(\ref{cycle}) then follows from the triangle inequality~(\ref{triangleineq}) by induction on $n$. Due to cyclic symmetry, it is sufficient to prove this in the case $i=n$. For $n=3$, the induction basis, this is precisely~(\ref{triangleineq}) itself.

For the induction step, we start with the induction assumption
$$
f(\{1,n-1\}) + \sum_{j=2}^{n-2} f(\{j\}) \leq \sum_{j=2}^{n-2} f(\{j,j+1\})
$$
to which we add the triangle inequality
$$
f(\{1,n\}) + f(\{n-1\}) \leq f(\{1,n-1\}) + f(\{n-1,n\})
$$
and get by canceling terms
$$
f(\{1,n\}) + \sum_{j=2}^{n-1} f(\{j\}) \leq \sum_{j=1}^{n-1} f(\{j,j+1\}) ,
$$
as desired.

Concerning the other implication direction, we start from a partial polymatroid $f$ defined on $\mathcal{C}_n$ satisfying~(\ref{cycle}) in addition to the basic inequalities
\begin{align}
\label{cyclenn}
0 \leq {}& f(\{i\}) , \\[6pt]
\label{cyclebasic1}
f(\{i\}) \leq f(\{i,i+1\}) ,&\qquad f(\{i+1\}) \leq f(\{i,i+1\}) , \\[6pt]
\label{cyclebasic2}
f(\{i,i+1\}) \leq {}& f(\{i\}) + f(\{i+1\}) ,
\end{align}
and prove that such an $f$ can be extended to a full polymatroid.

The inequalities~(\ref{cycle}) define a convex cone which contains $\Gamma^{\mathcal{C}_n}$. Our goal is to show that these two cones actually coincide. To this end, it is enough to consider the extremal rays of the former cone and prove that they are non-contextual as partial polymatroids. Since the inequalities~(\ref{cycle}) have integer coefficients, each extremal ray can be represented by a partial polymatroid $f$ assuming only integer values. Hence it is enough to prove the assertion for integer-valued $f$, which we assume to be the case from now on.

We now use induction on the ``total rank'' value $r_f=\sum_i f(\{i\}) + \sum_i f(\{i,i+1\})$ in order to prove the non-contextuality of $f$. The base case is $r_f=0$, which is trivial. The induction step consists in finding a non-zero polymatroid $g$ such that $f'=f-g_{|\mathcal{C}_n}$ is again a partial polymatroid satisfying the requirements~(\ref{cycle}),~(\ref{cyclenn}),~(\ref{cyclebasic1}) and~(\ref{cyclebasic2}). Since $r_{f'} < r_f$, the induction assumption applies, and $f'=h_{|\mathcal{C}_n}$ for some polymatroid $h$. Then $f=(h+g)_{|\mathcal{C}_n}$, as desired.

The $g$ we will construct will actually take values in $\{0,1\}$. In order to show that $f'=f-g$ satisfies all the desired inequalities, we always proceed as follows: if $g$ saturates a particular inequality (i.e.~satisfies it with equality), then $f'$ with also satisfy it since $f$ does so. If $g$ does not saturate it, then saturation fails by $1$, and in all these cases we will find that $f$ does not saturate the inequality either. Since $f$ is integer-valued, saturation has to fail by at least $1$. Therefore, $f'=f-g$ will also satisfy the inequality.

To find an appropriate polymatroid $g$, we distinguish four cases.

\renewcommand{\labelenumi}{\textbf{Case }\arabic{enumi}:}
\begin{enumerate}
\item $f(\{i,i+1\}) < f(\{i\}) + f(\{i+1\})$ for all $i$. In information-theoretic terms, this says that there is positive mutual information between each variable $i$ and $i+1$.

In this case, we define $g$ to be the polymatroid taking on a constant value of $1$ on all non-empty subsets of $[n]$; in particular, $g(\{i\})=g(\{i,i+1\})=1$ for all $i$. Then defining $f'=f-g$ will work: $f'$ satisfies~(\ref{cyclenn}) because the assumption implies $f(\{i\})\geq 1$ for all $i$; moreover, since $g$ saturates~(\ref{cycle}) and~(\ref{cyclebasic1}) for every $i$, the new $f'$ will satisfy these inequalities just as $f$ itself does. Finally,~(\ref{cyclebasic2}) holds for $f'$ since $f(\{i,i+1\})\leq f(\{i\})+f(\{i+1\}) - 1$ by assumption, and $g(\{i,i+1\}) = g(\{i\})+g(\{i+1\}) - 1$, so that
\begin{align*}
f'(\{i,i+1\}) &= f(\{i,i+1\}) - g(\{i,i+1\}) \\
 &\leq f(\{i\}) + f(\{i+1\}) - 1 - g(\{i\}) - g(\{i+1\}) + 1 \\
 &= f'(\{i\}) + f'(\{i+1\}) .
\end{align*}

\item There is exactly one $i$ for which $f(\{i,i+1\}) = f(\{i\}) + f(\{i+1\})$. In entropic terms, there is exactly one pair of neighboring variables which are independent. This implies that $f(\{j\})>0$ for all $j$, since otherwise $f(\{j,j+1\}) = f(\{j\}) + f(\{j+1\})$ would hold for at least two values of $j$.

We take this $i$ to be $i=n$ without loss of generality, so that
\beq
\label{1Nindep}
f(\{1,n\})=f(\{1\}) + f(\{n\})
\eeq
is assumed. Then we claim that the set of inequalities~(\ref{cycle}) is equivalent to the single inequality
\beq
\label{chain}
\sum_{i=1}^{n} f(\{i\}) \leq \sum_{i=1}^{n-1} f(\{i,i+1\}) ,
\eeq
given that~(\ref{cyclebasic1}) and~(\ref{cyclebasic2}) hold. For if $i\neq n$, we have
\begin{align*}
f(\{i,i+1\}) + \sum_{j\neq i,\, i+1} f(\{j\}) &\stackrel{(\ref{cyclebasic2})}{\leq} \sum_j f(\{j\}) \\
&\stackrel{(\ref{1Nindep})}{=} f(\{1,n\}) + \sum_{j=2}^i f(\{j\}) + \sum_{j=i+1}^{n-1} f(\{j\}) \\
&\stackrel{(\ref{cyclebasic1})}{\leq} f(\{1,n\}) + \sum_{j=2}^{i} f(\{j-1,j\}) + \sum_{j=i+1}^{n-1} f(\{j,j+1\}) \\
&= \sum_{j\neq i} f(\{j,j+1\}) ,
\end{align*}
which is precisely~(\ref{cycle}). For $i=n$ in turn,~(\ref{cycle}) coincides with~(\ref{chain}) under the assumption~(\ref{1Nindep}).

Now let $m$ be the smallest index value with the property that $f(\{m,m+1\}) > f(\{m+1\})$. Since by assumption, $f(\{2\})\leq f(\{1,2\}) < f(\{1\}) + f(\{2\})$, we have $f(\{1\})>0$. Therefore,~(\ref{chain}) implies $m\leq n-1$.

We define the polymatroid $g$ by taking it to assume the value $1$ on any set $S\subseteq [n]$ for which $S\cap\{1,\ldots,m\}\neq\emptyset$, and $0$ otherwise. This gives in particular,
$$
g(\{j\}) = \left\{\begin{array}{cl} 1 & \textrm{for } j\in\{1,\ldots,m\} \\ 0 & \textrm{otherwise} \end{array}\right. , \qquad g(\{j,j+1\}) = \left\{\begin{array}{cl} 1 & \textrm{for } j\in\{n,1,\ldots,m\} \\ 0 & \textrm{otherwise} \end{array}\right. .
$$
This $g$ corresponds to the situtation where the variables $1,\ldots,m$ have $1$ bit of entropy and are perfectly correlated, while all others are deterministic and have vanishing entropy.

It needs to be shown that setting $f'=f-g$ defines a partial polymatroid of the same kind, which means checking whether the equations~(\ref{cyclenn}),~(\ref{cyclebasic1}),~(\ref{cyclebasic2}) and~(\ref{chain}) hold. We know $f(\{j\})\geq 1$ for all  $j$, so that $f'(\{j\})\geq 0$. Furthermore, $g$ saturates $g(\{i\})\leq g(\{i,i+1\})$ for all $i$ except for $i=n$; therefore, $f'$ satisfies $f'(\{i\})\leq f'(\{i,i+1\})$ for all $i\neq n$. Moreover, since $f(\{1\})>0$, we have $f(\{1,n\}) = f(\{1\}) + f(\{n\}) \geq f(\{n\}) + 1$, and so
$$
f'(\{n\}) = f(\{n\}) - g(\{n\}) \leq f(\{1,n\}) - 1 = f'(\{n,1\}).
$$
A similar distinction of cases shows $f'(\{i+1\})\leq f'(\{i,i+1\})$ for all $i$. That~(\ref{cyclebasic2}) holds for $f'$ can be verified similarly: $g$ saturates this inequality for all $j\not\in\{1,\ldots,m-1\}$, whereas $f$ does not saturate it for the other values of $j$ by assumption; therefore, $f'$ satisfies it. Finally, $g$ also saturates~(\ref{chain}), so that $f'$ will also satisfy it since $f$ does.

\item Still $f(\{j\})>0$ for all $j$, but now there are two or more values of $i$ for which $f(\{i,i+1\}) = f(\{i\}) + f(\{i+1\})$,

As in the previous case, we take one of these values to be $i=n$. Then the same observations apply:~(\ref{cycle}) is equivalent to~(\ref{chain}). Moreover, even that inequality is now automatic: for $k\neq n$ being the smallest value for which also $f(\{k,k+1\}) = f(\{k\}) + f(\{k+1\})$, we have $k\leq n-1$ by assumption, and therefore
$$
\sum_{i=1}^{n} f(\{i\}) = \sum_{i=1}^{k-1} f(\{i\}) + f(\{k,k+1\}) + \sum_{i=k+2}^{n} f(\{i\}) \stackrel{(\ref{cyclebasic1})}{\leq} \sum_{i=1}^{n-1} f(\{i,i+1\}) .
$$
As in the previous case, we define $g$ by setting $g(S)$ to be $1$ if $S\cap\{1,\ldots,k\}\neq\emptyset$, while $g(S)=0$ otherwise. This means in particular,
$$
g(\{j\}) = \left\{\begin{array}{cl} 1 & \textrm{for } j\in\{1,\ldots,k\} \\ 0 & \textrm{otherwise} \end{array}\right. , \qquad g(\{j,j+1\}) = \left\{\begin{array}{cl} 1 & \textrm{for } j\in\{n,1,\ldots,k\} \\ 0 & \textrm{otherwise} \end{array}\right. ,
$$
Defining $f'=f-g$ now gives the desired new partial polymatroid: by the observation of the previous paragraph, it is enough to check~(\ref{cyclebasic1}) and~(\ref{cyclebasic2}), and then~(\ref{cycle}) will be automatic. Checking this can be done as at the end of the previous case.

\item $f(\{j\}) = 0$ for some $j$. Thanks to the cyclic symmety, we may consider the case $f(\{n\})=0$ and $f(\{1\})>0$ without loss of generality. Then let $k$ be the smallest index for which $f(\{k,k+1\})>f(\{k+1\})$. In particular, $f(\{j\})>0$ for all $j\in\{1,\ldots,k\}$. The polymatroid $g$ can be defined as in the previous case. Then $f'=f-g$ clearly satisfies~(\ref{cyclenn}). It also satisfies~(\ref{cyclebasic1}), since $g$ saturates these except for $g(\{n\}) < g(\{n,1\})$ and $g(\{k+1\}) < g(\{k,k+1\})$, which is fine since $f$ does not saturate them. Similarly for~(\ref{cyclebasic2}), which $f$ does not saturate for $i\in \{1,\ldots,k-1\}$ thanks to $f(\{i+1\}) = f(\{i,i+1\}) < f(\{i\}) + f(\{i+1\})$ and $g$ saturates for all other values of $i$. Finally, $f'(\{n\})=0$ guarantees that $f'$ also satisfies~(\ref{cycle}).
\end{enumerate}

This ends the proof.
\end{proof}

Writing the inequalities~(\ref{cycle}) in terms of entropies of random variables reads
\beq
\label{jecycle} H(A_i A_{i+1}) \: + \sum_{j\neq\, i,\,i+1} H(A_j) \: \leq \: \sum_{j\neq i} H(A_j A_{j+1}) \qquad \forall i=1,\ldots,n
\eeq
This inequality can be applied to marginal problems on $\mathcal{C}_n$:

\begin{prop}
\label{eidc}
Let $n\geq 3$. There is a marginal model on $\mathcal{C}_n$ whose contextuality gets detected by~(\ref{jecycle}).
\end{prop}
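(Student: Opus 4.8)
The plan is to lift the contextual marginal model~(\ref{trianglePuc}) of example~\ref{triangle2} from the triangle to the general $n$-cycle. First I would let each $A_i$ be a fair bit, uniformly distributed on $\{0,1\}$, and prescribe the pair marginals of $\mathcal{C}_n$ so that $A_j$ and $A_{j+1}$ are perfectly correlated for $j=1,\dots,n-1$ --- each such pair distribution being $\tfrac12$ on the diagonal $a_j=a_{j+1}$ and $0$ off it --- while $A_n$ and $A_1$ are independent and uniform, their pair distribution being the constant $\tfrac14$. Since every one of these pair distributions has both of its single-variable marginals uniform on $\{0,1\}$, the consistency requirement of definition~\ref{mm} is met, so this data does define a marginal model $P^{\mathcal{C}_n}$ on $\mathcal{C}_n$; for $n=3$ it reduces to~(\ref{trianglePuc}).

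Next I would record the induced entropies: $H(A_i)=1$ for every $i$, $H(A_jA_{j+1})=1$ for $j=1,\dots,n-1$ by perfect correlation, and $H(A_nA_1)=2$ by independence; equivalently, the associated partial polymatroid is $f^{\mathcal{C}_n}(\{i\})=1$, $f^{\mathcal{C}_n}(\{j,j+1\})=1$ for $j<n$, and $f^{\mathcal{C}_n}(\{n,1\})=2$. Substituting these values into~(\ref{jecycle}) for the index $i=n$, the left-hand side equals $H(A_nA_1)+\sum_{j\neq n,1}H(A_j)=2+(n-2)=n$, whereas the right-hand side equals $\sum_{j\neq n}H(A_jA_{j+1})=(n-1)\cdot 1=n-1$. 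Hence~(\ref{jecycle}) asserts $n\leq n-1$, which is false, so this inequality is violated by $P^{\mathcal{C}_n}$. Since proposition~\ref{cycletight} shows that every non-contextual partial polymatroid on $\mathcal{C}_n$ --- in particular the restriction to $\mathcal{C}_n$ of the entropy polymatroid of any joint distribution --- must satisfy~(\ref{cycle}), this violation proves that $P^{\mathcal{C}_n}$ is contextual and that~(\ref{jecycle}) detects its contextuality.

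There is no serious obstacle here; the only points needing care are the bookkeeping of indices modulo $n$ in~(\ref{cycle}), the verification of the compatibility condition for the prescribed pair marginals, and the observation (already used in the proof of proposition~\ref{cycletight}) that a violated valid Shannon-type inequality is a legitimate certificate of contextuality. Conceptually the construction is the ``broken chain of perfect correlations'': the $n-1$ consecutive perfect correlations would force $A_1,\dots,A_n$ to be almost surely equal, which is incompatible with $A_1$ and $A_n$ being independent, exactly as in the transitivity argument of example~\ref{triangle1}.
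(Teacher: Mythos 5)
Your construction is exactly the paper's: the generalization of~(\ref{trianglePuc}) to the $n$-cycle, with $n-1$ perfectly correlated consecutive pairs and the pair $\{1,n\}$ independent, yielding entropies that make~(\ref{jecycle}) read $n\leq n-1$ for $i=n$. The proposal is correct and simply spells out the entropy bookkeeping and the compatibility check that the paper leaves implicit.
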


\begin{proof}
An example is given by a generalization of~(\ref{trianglePuc}) to all $n$, again with $a_i\in\{\textit{heads},\textit{tails}\}$,
\begin{align*}
P^{\mathcal{C}_n}_{\{i,i+1\}}(A_i=a_i,A_{i+1}=a_{i+1}) &= \left\{ \begin{array}{cl} 1/2 & \textrm{if }\: a_i = a_{i+1} \\ 0 & \textrm{if }\: a_i\neq a_{i+1} \end{array}\right. \quad\forall i=1,\ldots,n-1,\\[6pt]
P^{\mathcal{C}_n}_{\{1,n\}}(A_1=a_1,A_n=a_n) &= \quad\: 1/4 \quad \forall a_1,a_n
\end{align*}
The entropies associated to this marginal model violate~(\ref{jecycle}).
\end{proof}

See~\cite{ent_approach} for more examples, including many arising from quantum theory.

Given an integer-valued partial polymatroid $f^{\mathcal{C}_n}$ satisfying~(\ref{cycle}), the polymatroids $g$ used in the proof of proposition~\ref{cycletight} are actually entropic, so that $f$ turns out to be a sum of entropic polymatroids, and therefore is itself an entropic polymatroid. Hence we have also proven that non-Shannon type entropic inequalities cannot be relevant for marginal problems on $\mathcal{C}_n$. In other words:

\begin{cor}
\label{nononShannon}
Let $n\geq 3$. Every entropic inequality containing only terms $H(A_j)$ and $H(A_jA_{j+1})$, $j\in[n]$, is Shannon-type.
\end{cor}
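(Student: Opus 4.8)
The plan is to reduce Corollary~\ref{nononShannon} to Proposition~\ref{cycletight} together with the strengthening supplied by its proof. Recall what it means for an entropic inequality supported on the coordinates $H(A_j)$, $H(A_jA_{j+1})$, $j\in[n]$, to be \emph{Shannon-type} in this restricted setting: it should be a consequence of the basic inequalities for polymatroids, i.e.\ it should be a valid inequality for the projected cone $\Gamma^{\mathcal{C}_n}\subseteq\R^{\mathcal{C}_n}$. So the statement to prove is: every linear inequality valid on the entropic partial polymatroids over $\mathcal{C}_n$ is already valid on all of $\Gamma^{\mathcal{C}_n}$. Equivalently, writing $\overline{\Gamma}^{*,\mathcal{C}_n}$ for the closed convex cone generated by entropic partial polymatroids over $\mathcal{C}_n$, the claim is the cone equality $\overline{\Gamma}^{*,\mathcal{C}_n}=\Gamma^{\mathcal{C}_n}$.

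First I would note the trivial inclusion $\overline{\Gamma}^{*,\mathcal{C}_n}\subseteq\Gamma^{\mathcal{C}_n}$: every entropic partial polymatroid is in particular a partial polymatroid, hence non-contextual (it extends to the entropic polymatroid $S\mapsto H(A_S)$ on all of $2^{[n]}$), hence lies in $\Gamma^{\mathcal{C}_n}$; the cone $\Gamma^{\mathcal{C}_n}$ is closed, so the closure of the entropic cone is contained in it. For the reverse inclusion, by Proposition~\ref{cycletight} the cone $\Gamma^{\mathcal{C}_n}$ is exactly the polyhedral cone cut out by the inequalities~(\ref{cycle}) together with the basic inequalities~(\ref{cyclenn})--(\ref{cyclebasic2}). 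Since this cone is polyhedral with integer facet coefficients, it is generated by its extremal rays, each of which can be scaled to an integer-valued partial polymatroid $f$. So it suffices to show that every such integer-valued $f$ lies in $\overline{\Gamma}^{*,\mathcal{C}_n}$ — in fact, that it is itself entropic.

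The key observation is that the proof of Proposition~\ref{cycletight} already does this: the induction on the total rank $r_f$ writes any integer-valued $f$ satisfying~(\ref{cycle}) and the basic inequalities as a finite sum $f=(g_1+\dots+g_k)_{|\mathcal{C}_n}$ of restrictions of polymatroids $g_\ell$, where — and this is the point I would make explicit — every $g_\ell$ constructed in Cases~1--4 is of one of two forms: either $g_\ell$ is constant $1$ on all nonempty subsets (this is the rank function of a single uniformly-distributed bit, shared by all $n$ variables), or $g_\ell(S)=1$ exactly when $S$ meets a fixed interval $\{1,\dots,m\}$ of indices and $0$ otherwise (this is the rank function of the model in which the variables in that interval are one perfectly-correlated shared bit and all other variables are deterministic). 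Both are manifestly entropic polymatroids on $[n]$. Since entropic polymatroids are closed under addition (the joint distribution of independent copies has the sum of the entropies as its rank function — take $A_S$ under the product of the corresponding distributions over disjoint sample spaces), $g_1+\dots+g_k$ is entropic as a polymatroid on $2^{[n]}$, hence its restriction $f$ to $\mathcal{C}_n$ is an entropic partial polymatroid. Therefore every extremal ray of $\Gamma^{\mathcal{C}_n}$, and hence all of $\Gamma^{\mathcal{C}_n}$, consists of entropic partial polymatroids, giving $\Gamma^{\mathcal{C}_n}\subseteq\overline{\Gamma}^{*,\mathcal{C}_n}$ and thus equality. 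Consequently any valid linear inequality on the entropic partial polymatroids over $\mathcal{C}_n$ is valid on all of $\Gamma^{\mathcal{C}_n}$, i.e.\ is Shannon-type, which is the corollary.

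The only genuine content beyond quoting the proposition is verifying that the auxiliary polymatroids $g$ from the four cases are entropic and that the entropic cone is closed under the relevant sums; these are short but should be stated carefully. The mild subtlety — and the step I would be most careful about — is the passage from ``extremal rays are entropic'' to ``every point of the cone is a nonnegative-rational, hence limit of nonnegative-rational, combination of them, hence in the closed entropic cone'': one should make sure to invoke the \emph{closed} entropic cone $\overline{\Gamma}^{*,\mathcal{C}_n}$ rather than claim every $f\in\Gamma^{\mathcal{C}_n}$ is literally entropic (although for integer $f$ it is). Since ``Shannon-type'' is by definition about membership in the polyhedral cone defined by the basic inequalities, and we have shown that cone coincides with the closed entropic one, no approximation issue actually bites: a linear inequality holds on a convex cone iff it holds on its closure, so the conclusion is clean.
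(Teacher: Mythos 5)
Your proposal is correct and follows essentially the same route as the paper: the paper's proof is precisely the observation, stated just before the corollary, that the auxiliary polymatroids $g$ from the proof of proposition~\ref{cycletight} are entropic (a shared bit among an interval of variables, or among all of them), so that every integer-valued partial polymatroid satisfying~(\ref{cycle}) is a sum of entropic polymatroids and hence itself entropic, which forces every valid entropic inequality on the $\mathcal{C}_n$ coordinates to hold on all of $\Gamma^{\mathcal{C}_n}$ and thus be Shannon-type. Your additional care with extremal rays, additivity under independent copies, and passing to the closed entropic cone just makes explicit the bookkeeping the paper leaves implicit.
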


\subsection{Application to correlations in stochastic processes}
\label{stocproc}
In terms of mutual information, the inequality~(\ref{jecycle}) in the case $i=n$ can be rewritten as
\beq
\label{micycle}
I(A_1:A_n) \geq \sum_{j=1}^{n-1} I(A_j:A_{j+1}) - \sum_{j=2}^{n-1} H(A_j) .
\eeq
This can be interpreted as a lower bound on the correlation between $A_1$ and $A_n$, given that there are certain correlations between each $A_i$ and $A_{i+1}$ for $i=1,\ldots,n-1$. This in particular suggests an application to stochastic processes, an idea which we briefly explore now. Let $(A_i)_{i\in\Z}$ be a stationary stochastic process. Stationarity implies that $H(A_j) = H(A_1)$ and $I(A_j:A_{j+1}) = I(A_1:A_2)$ for all $j\in\Z$. Using this, the inequality can also be written as
\beq
\label{lbcorr}
I(A_1 : A_n) \geq H(A_1) - (n-1) H(A_2|A_1) .
\eeq
We think of this as follows: let $A_1$ be a signal which undergoes $n-1$ applications of some noise, which results in noisy signals $A_2,\ldots,A_n$. These noise applications do not have to be independent, but we require them to not depend on the particular iteration and to not change the distribution of the signal in order for the resulting stochastic process to be stationary. We would like to know how well the final signal $A_n$ approximates the original signal $A_1$. Our inequality~(\ref{lbcorr}) gives a lower bound on the quality with which the original signal can be recovered from the final noisy one. The results of this section also show that this is the best linear inequality between entropies in this context. Applying the bound only requires the entropy of the signal $A_1$ to be known together with $H(A_2|A_1)$, which quantifies the amount of noise added at each timestep. One consequence is that the noisy signal $A_n$ contains some information about the original signal $A_1$ for all $n\leq\left\lfloor\frac{H(A_1)}{H(A_2|A_1)}\right\rfloor$.

\section{Shannon-type inequalities for differential entropy}
\label{differential}

It is an appealing feature of entropic inequalities that they apply regardless of the number of outcomes of each variable. One may think that this makes entropic inequalities also applicable to random variables having an uncountable range, like random variables described by continuous probability density functions, if one replaces the discrete entropy $-\sum_i p_i\log p_i$ by the differential entropy $-\int p(x)\log p(x) \, dx$. Unfortunately, this turns out not to be the case. The problem with this approach is that, for given continuous random variables $A_1,\ldots,A_n$, the joint differential entropies do not form a polymatroid. Although submodularity~(\ref{submod}) remains valid~\cite[(10.136)]{YeungBook}, monotonicity fails: for example, when $A_1$ and $A_2$ are independent and uniformly distributed on $[0,\varepsilon]$ for some $\varepsilon>0$, then the differential entropies are
$$
h(A_1) = h(A_2) = \log\varepsilon \:, \qquad h(A_1 A_2) = 2\log\varepsilon
$$
So for $\varepsilon < 1$, we have $h(A_1 A_2) < h(A_1) < 0$. Intuitively, the reason for this is that differential entropy quantifies the randomness of a distribution \emph{relative} to the Lebesgue measure. In particular, this relative entropy can become negative, meaning that $h(A)$ itself can become negative. Similar considerations apply to differential conditional entropy $h(A_2|A_1) = h(A_1 A_2) - h(A_1)$.

We have seen in the previous sections that entropic inequalities can detect the contextuality of marginals models for discrete random variables. Does this also apply to marginal problems for continuous random variables if one uses differential entropy? This would be interesting, since marginal problems for continuous variables are an important topic with relevance to applied statistics~\cite{DallA,Joe,QR}. We will show in this section that this is not the case with Shannon-type inequalities, but give an example in the next section of a non-Shannon-type inequality which can detect the contextuality of a continuous-variable marginal model.

The only basic inequalities which remain valid for differential entropy are the submodularity inequalities~(\ref{submod}). Therefore, the Shannon-type inequalities for differential entropy are those which are linear combinations of submodularity inequalities only; these coincide with the \emph{balanced} entropic inequalities of Chan~\cite{Chan}.

Therefore, instead of using partial polymatroids, we now work with \emph{submodular} functions $f:2^{[n]}\to\R$. These are those functions which satisfy~(\ref{submod}) and $f(\emptyset)=0$, but not necessarily~(\ref{nonneg}) or~(\ref{increase}). Similarly, a function $f^{\mathcal{M}}:\mathcal{M}\to\R$ is called \emph{submodular} if $f^{\mathcal{M}}(\emptyset)=0$ and $f$ satisfies~(\ref{submod}) for those $S,T\in\mathcal{M}$ for which $S\cup T\in\mathcal{M}$.

In this way, the Shannon-type entropic inequalities (for differential entropy) are precisely those inequalities which hold for all submodular functions.

\begin{prop}
\label{submodnonc}
Let $f^{\mathcal{M}}:\mathcal{M}\to\R$ be submodular. Then there is a submodular function $f:2^{[n]}\to\R$ such that $f_{|\mathcal{M}} = f^{\mathcal{M}}$.
\end{prop}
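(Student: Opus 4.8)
The plan is to construct the desired submodular extension $f:2^{[n]}\to\R$ explicitly by a formula, rather than to extend one subset at a time. Recall that $\mathcal{M}$ is a simplicial complex: it is closed under subsets, and in particular $\emptyset\in\mathcal{M}$ and all singletons that appear in any set of $\mathcal{M}$ are in $\mathcal{M}$. The natural candidate is to define, for an arbitrary $S\subseteq[n]$,
\[
f(S) \;=\; \min\left\{ \sum_{k} f^{\mathcal{M}}(T_k) \;:\; S\subseteq \bigcup_k T_k,\ T_k\in\mathcal{M} \right\},
\]
i.e.\ the largest submodular (in fact, the largest monotone submodular) function lying below $f^{\mathcal{M}}$ on $\mathcal{M}$ extended by a covering/convolution construction; an alternative is the pointwise maximum of all submodular $g$ with $g|_{\mathcal{M}}\le f^{\mathcal{M}}$. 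Since $f^{\mathcal{M}}$ need not be monotone or nonnegative, one has to be a little careful that such a formula is finite and actually agrees with $f^{\mathcal{M}}$ on $\mathcal{M}$; if finiteness is an issue, a cleaner route is to first reduce to the monotone case.

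Concretely, the first step I would carry out is a reduction. By adding a suitable ``tautological'' term I can reduce to the situation where $f^{\mathcal{M}}$ is monotone and nonnegative on $\mathcal{M}$: indeed, for $c>0$ large enough the function $S\mapsto f^{\mathcal{M}}(S) + c\,|S|$ is monotone nonnegative on $\mathcal{M}$ (submodularity is preserved since $S\mapsto|S|$ is modular, and monotonicity on the finite poset $\mathcal{M}$ holds once $c$ exceeds the largest drop $f^{\mathcal{M}}(S')-f^{\mathcal{M}}(S)$ over $S'\subseteq S$ in $\mathcal{M}$). If this modified function extends to a submodular $\tilde f$ on $2^{[n]}$, then $\tilde f(S) - c\,|S|$ is a submodular extension of $f^{\mathcal{M}}$, because modular functions can be subtracted freely. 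So it suffices to prove the proposition for monotone nonnegative $f^{\mathcal{M}}$ — but a monotone nonnegative submodular function with $f^{\mathcal{M}}(\emptyset)=0$ is exactly a partial polymatroid in the sense of Definition~\ref{defpp}. Hence the statement I really need is that every partial polymatroid on $\mathcal{M}$ is non-contextual, i.e.\ extends to a full polymatroid.

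The second step is then to prove that extension statement for partial polymatroids, which I expect to be the technical heart. Here the covering formula above works cleanly: define $f(S)=\min\{\sum_k f^{\mathcal{M}}(T_k) : S\subseteq\bigcup_k T_k,\ T_k\in\mathcal{M}\}$, which is finite since the singletons in the support of $\mathcal{M}$ together with nonnegativity give a cover, with the convention that uncoverable $S$ — which cannot occur if every element of $[n]$ lies in some set of $\mathcal{M}$; one should state this running assumption or else restrict $[n]$ to the union of $\mathcal{M}$. Monotonicity and $f(\emptyset)=0$ are immediate. Submodularity $f(S\cap T)+f(S\cup T)\le f(S)+f(T)$ follows by taking optimal covers of $S$ and of $T$ and observing that their union covers $S\cup T$ while, to handle $S\cap T$, one uses that the collection of sets in a cover of $S$ intersected appropriately still lies in $\mathcal{M}$ (closure under subsets!) and covers $S\cap T$ — this ``uncrossing'' of covers is the step that genuinely uses that $\mathcal{M}$ is a simplicial complex. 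Finally, $f|_{\mathcal{M}}=f^{\mathcal{M}}$: the inequality $f(S)\le f^{\mathcal{M}}(S)$ is trivial (take the one-set cover $\{S\}$), and the reverse inequality $f^{\mathcal{M}}(S)\le\sum_k f^{\mathcal{M}}(T_k)$ for any cover of $S\in\mathcal{M}$ follows by induction on the number of covering sets, using submodularity and monotonicity of $f^{\mathcal{M}}$ within $\mathcal{M}$ together with the fact that $S\cap T_k\in\mathcal{M}$ and $S\cap(T_1\cup\cdots\cup T_j)\in\mathcal{M}$ for each partial union (again by closure under subsets).

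The main obstacle, and the step to write carefully, is the submodularity verification for the covering formula — specifically the uncrossing argument that produces a valid cover of $S\cap T$ from optimal covers of $S$ and $T$ without increasing total cost, and the parallel inductive argument showing no cover of a set $S\in\mathcal{M}$ can beat $f^{\mathcal{M}}(S)$. Both hinge essentially on $\mathcal{M}$ being downward closed, so I would foreground that. If instead one prefers to avoid the explicit formula, the alternative is a one-element-at-a-time extension: order $[n]\setminus(\text{``known part''})$ and, at each stage, extend the current partial polymatroid on a simplicial complex $\mathcal{M}$ to the complex $\mathcal{M}\cup\{S\}$ where $S$ is a minimal missing set, by choosing $f(S)$ in the (nonempty, by the basic inequalities) interval forced by monotonicity and submodularity relative to the already-defined values; the nonemptiness of that interval is precisely where the partial-polymatroid axioms on $\mathcal{M}$ get used, and this is really the same uncrossing lemma in local form. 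I would present the covering-formula proof as the cleaner of the two.
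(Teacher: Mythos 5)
There is a genuine gap, and it sits exactly at the point you flagged as the ``technical heart.'' Your monotonization step ($f^{\mathcal{M}}(S)\mapsto f^{\mathcal{M}}(S)+c\,|S|$) is fine as far as it goes, but the conclusion you draw from it --- that it suffices to show every partial polymatroid extends to a full polymatroid --- is an overreach that turns the proposition into a \emph{false} statement. After monotonizing, all you need is a submodular extension of the monotone data (you then subtract $c\,|S|$); you do \emph{not} need the extension to be monotone. And the stronger claim is false: the partial polymatroid of Example~\ref{triangle2} on $\mathcal{C}_3$ (singletons and $\{1,2\},\{2,3\}$ of rank $1$, $\{1,3\}$ of rank $2$) is contextual, precisely because it violates the triangle inequality~(\ref{triangleineq}); indeed the entire point of the paper (Proposition~\ref{cycletight}, Example~\ref{triangle2}) is that partial polymatroids need \emph{not} extend to polymatroids. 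The same example kills your covering formula: there one computes $f(\{1,2,3\})=\min$-cost cover $=2$ (via $\{1,2\}\cup\{2,3\}$), and then $f(\{1,2,3\})+f(\{2\})=3>2=f(\{1,2\})+f(\{2,3\})$, so the covering extension is not submodular. The sketched ``uncrossing'' cannot be repaired: from covers of $S$ and of $T$ you must produce covers of both $S\cup T$ and $S\cap T$ within the same total budget, and the union of the two covers already exhausts it; a correct proof here would contradict Proposition~\ref{cycletight}.

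The proposition is true for a different reason, which your approach obscures by re-imposing monotonicity: since the extension is only required to be submodular, a missing value is constrained only from \emph{above}. The paper's proof extends one minimal missing set $V$ at a time (all proper subsets of $V$ already lie in the current complex), setting $f(V):=\min_{S,T\subsetneq V}\bigl(f(S)+f(T)-f(S\cap T)\bigr)$; every new submodularity constraint involving $V$ is either trivial or an upper bound on $f(V)$, so this choice always works, and iterating reaches $2^{[n]}$. Your ``one-element-at-a-time'' alternative is essentially this argument, but as you state it --- for partial polymatroids, with monotonicity also enforced --- the forced interval can be empty (again Example~\ref{triangle2}), so the nonemptiness claim fails. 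Drop the reduction to the monotone case and enforce only submodularity in the local step, and your second route becomes correct (and coincides with the paper's proof).
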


\begin{proof}
We choose a set $V\subseteq [n]$ such that $V\not\in\mathcal{M}$, but such that all proper subsets of $V$ are in $\mathcal{M}$, and define $\mathcal{M}'=\mathcal{M}\cup\{V\}$. Then
$$
f^{\mathcal{M}'} \::\: \mathcal{M}' \to \R ,\qquad U \mapsto \left\{\begin{array}{cl} f^{\mathcal{M}}(U) & \textrm{if }\: U\in\mathcal{M} \\ \min_{S,T\subsetneq V} \left(f^{\mathcal{M}}(S) + f^{\mathcal{M}}(T) - f^{\mathcal{M}}(S\cap T) \right) & \textrm{if }\: U = V \end{array}\right.
$$
extends $f^{\mathcal{M}}$ from $\mathcal{M}$ to $\mathcal{M}'$. We claim that $f^{\mathcal{M}'}$ is submodular on $\mathcal{M}'$. It needs to be shown that for any $S,T\in\mathcal{M}'$ with $S\cup T\in \mathcal{M}'$, the submodularity inequality
$$
f^{\mathcal{M}'}(S\cup T) + f^{\mathcal{M}'}(S\cap T) \leq f^{\mathcal{M}'}(S) + f^{\mathcal{M}'}(T)
$$
holds. If $S\cup T\in\mathcal{M}$, this is true since $f^{\mathcal{M}'}$ restricts to the submodular $f^{\mathcal{M}}$ on $\mathcal{M}$. If $S\cup T\not\in\mathcal{M}$, then necessarily $S\cup T=V$, and the assertion follows from the definition of $f^{\mathcal{M}'}(V)$.

Repeated application of this extension procedure eventually produces a submodular extension of $f^{\mathcal{M}}$ to all subsets of $[n]$.
\end{proof}

This implies that there are no Shannon-type inequalities for differential entropy which would be able to detect contextuality of marginal models: the differential entropy function $f^{\mathcal{M}} : \mathcal{M}\to\R$ associated to a continuous-variable marginal model has an extension to a submodular function $2^{[n]}\to\R$. In particular, it satisfies all inequalities which hold for submodular functions, and these are precisely the Shannon-type inequalities for differential entropy.

\section{Non-Shannon-type inequalities}
\label{nsti}

In this section, we enumerate the variables $A_i$ by indices $i\in\{w,x,y,z\}$ instead of $i\in\{1,2,3,4\}$, as the latter choice might cause confusion.

It has been known since 1998~\cite{Yeung3} that the inclusion $\overline{\Gamma}_n^*\subseteq\Gamma_n$ is strict for $n\geq 4$. The inequality
\begin{align}
\label{nst}
\begin{split}
-4H(A_w & A_y A_z) - H(A_x A_y A_z) - H(A_w A_x) + 3 H(A_w A_y) + 3 H(A_w A_z) \\
&+ H(A_x A_y) + H(A_x A_z) + 3 H(A_y A_z) - H(A_w) - 2 H(A_y) - 2 H(A_z) \geq 0	
\end{split}
\end{align}
bounds $\overline{\Gamma}_4^*$, but not $\Gamma_4$~\cite[Thm.~15.7]{YeungBook}. We can consider this inequality as an entropic inequality in the following marginal scenario, named after the authors of~\cite{Yeung3},
\beq
\label{MZY}
\mathcal{M}_{ZY} = \overline{\left\{\{w,y,z\},\{x,y,z\},\{w,x\} \right\}}^{\subseteq} ,
\eeq
$\mathcal{M}_{ZY}$ as a simplicial complex is illustrated in figure~\ref{ZYcomplex}.

We now consider the partial polymatroid $f^{ZY}$ depicted in figure~\ref{ZYpp}. In formulas, its values are
\begin{align*}
f^{ZY}(\emptyset) = 0,\qquad & f^{ZY}(\{w\}) = f^{ZY}(\{x\}) = f^{ZY}(\{y\}) = 2, \\[6pt]
f^{ZY}(\{w,x\}) = 4,\qquad f^{ZY}(\{w,y\}) = & f^{ZY}(\{w,z\}) = f^{ZY}(\{x,y\}) = f^{ZY}(\{x,z\}) = f^{ZY}(\{y,z\}) = 3,\\[6pt]
f^{ZY}(\{w,y,& z\}) = f^{ZY}(\{x,y,z\}) = 4 .
\end{align*}
It violates (the polymatroid analogue of) inequality~(\ref{nst}).

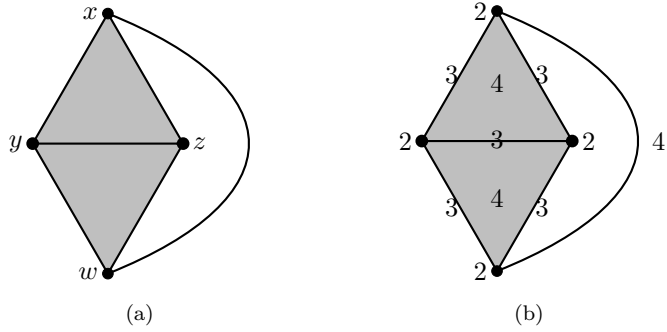
\begin{figure}
\subfigure[]{\label{ZYcomplex}
\begin{tikzpicture}
\draw[fill=lightgray,thick] (5,0) node[anchor=east] {$w$} -- (6,1.73) node[anchor=west] {$z$} -- (5,3.46) node[anchor=east] {$x$} -- (4,1.73) node[anchor=east] {$y$} -- (5,0) ;
\draw[fill=black] (5,0) circle (2pt) ;
\draw[fill=black] (5,3.46) circle (2pt) ;
\draw[fill=black,thick] (4,1.73) circle (2pt) -- (6,1.73) circle (2pt) ;
\draw[thick] (5,0) .. controls (7.5,1) and (7.5,2.46) .. (5,3.46) ;
\end{tikzpicture}}\hspace{1cm}
\subfigure[]{\label{ZYpp}
\begin{tikzpicture}
\draw[fill=lightgray,thick] (5,0) node[anchor=east] {$2$} -- (6,1.73) node[anchor=west] {$2$} -- (5,3.46) node[anchor=east] {$2$} -- (4,1.73) node[anchor=east] {$2$} -- (5,0) ;
\draw[fill=black] (5,0) circle (2pt) ;
\draw[fill=black] (5,3.46) circle (2pt) ;
\draw[fill=black,thick] (4,1.73) circle (2pt) -- (6,1.73) circle (2pt) ;
\draw[thick] (5,0) .. controls (7.5,1) and (7.5,2.46) .. (5,3.46) ;
\node at (5,0.97) {$4$} ;
\node at (5,2.50) {$4$} ;
\node at (7.15,1.73) {$4$} ;
\node at (5,1.75) {$3$} ;
\node at (4.40,2.62) {$3$} ;
\node at (5.60,2.62) {$3$} ;
\node at (4.40,0.85) {$3$} ;
\node at (5.60,0.85) {$3$} ;
\end{tikzpicture}}
\caption{\subref{ZYcomplex} The marginal scenario~(\ref{MZY}) as a simplicial complex; the numbers are vertex indices.~\subref{ZYpp} The partial polymatroid $f^{ZY}$; the numbers are the values that $f^{ZY}$ assigns to the simplices.}
\end{figure}

\begin{lem}
\label{fZY}
$f^{ZY}$ arises from taking entropies of a marginal model in the marginal scenario $\mathcal{M}_{ZY}$.
\end{lem}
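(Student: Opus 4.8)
The plan is to write the marginal model down explicitly. Since $f^{ZY}$ assigns the value $2$ to every singleton, I would let each variable take values in $\mathbb{F}_2^2$ and write $A_i=(i_1,i_2)$ for its two bits. For the context $\{w,y,z\}$ I would take the uniform distribution on the $\mathbb{F}_2$-subspace $V_{wyz}=\{(w_1,w_2,y_1,y_2,z_1,z_2)\in\mathbb{F}_2^6 : w_1=y_1=z_1\}$; concretely, $A_w,A_y,A_z$ share one uniformly random bit and are otherwise independent and uniform. This space has dimension $4$, so $H(A_wA_yA_z)=4$; it projects onto each single variable surjectively ($H=2$) and onto each pair of the three variables as a $3$-dimensional subspace ($H=3$). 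These are exactly the values of $f^{ZY}$ on all subsets of $\{w,y,z\}$. By the $w\leftrightarrow x$ symmetry of the prescribed values, the context $\{x,y,z\}$ is handled identically, using $V_{xyz}=\{x_1=y_1=z_1\}$. Finally, $f^{ZY}(\{w\})=f^{ZY}(\{x\})=2$ together with $f^{ZY}(\{w,x\})=4$ forces $I(A_w:A_x)=0$, so for the context $\{w,x\}$ I would simply take $A_w$ and $A_x$ independent and uniform on $\mathbb{F}_2^2$.

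It then remains to check that these three distributions constitute a marginal model on $\mathcal{M}_{ZY}$, i.e.\ that the compatibility condition~(\ref{sheafcond}) holds on the pairwise overlaps $\{w,y,z\}\cap\{x,y,z\}=\{y,z\}$, $\{w,y,z\}\cap\{w,x\}=\{w\}$ and $\{x,y,z\}\cap\{w,x\}=\{x\}$. Both of the first two distributions marginalize on $\{y,z\}$ to the uniform distribution on $\{y_1=z_1\}\subseteq\mathbb{F}_2^4$, and all three marginalize on any single variable to the uniform distribution on $\mathbb{F}_2^2$; hence the model is well defined. Computing the Shannon entropy of each of the $13$ distributions indexed by $\mathcal{M}_{ZY}$ then reproduces the values of $f^{ZY}$ verbatim, which is the assertion.

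I do not expect a real obstacle here: everything reduces to a short linear-algebra computation. The one conceptual point to stress is that a marginal model need only be \emph{pairwise} consistent on overlaps and, in particular, need not come from a single joint distribution --- indeed it cannot in this case, since $f^{ZY}$ violates the non-Shannon-type inequality~(\ref{nst}). The ``shared bit'' $w_1=y_1=z_1$ of the first context and $x_1=y_1=z_1$ of the second are each internally consistent and agree on $\{y,z\}$, yet impose no relation between $A_w$ and $A_x$; this is precisely what lets $H(A_wA_x)=4$ coexist with all the other entropy values.
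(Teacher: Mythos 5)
Your construction is exactly the one in the paper: the paper defines $A_w=(\alpha_w,\beta)$, $A_y=(\alpha_y,\beta)$, $A_z=(\alpha_z,\beta)$ from independent uniform bits (one shared, the rest private), which is precisely your uniform distribution on $\{w_1=y_1=z_1\}$, handles $\{x,y,z\}$ by the same symmetry, and takes the product distribution for $\{w,x\}$. The entropy bookkeeping and the compatibility check on the overlaps $\{y,z\}$, $\{w\}$, $\{x\}$ match the paper's argument, so your proof is correct and essentially identical to it.
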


\begin{proof}
We start by defining a joint distribution for $A_w$, $A_y$ and $A_z$. Let $(\alpha_w,\alpha_y,\alpha_z,\beta)$ be a list of four independent and uniformly distributed bits. Then the definitions
$$
A_w = (\alpha_w,\beta),\quad A_y = (\alpha_y,\beta),\quad A_z = (\alpha_z,\beta)
$$
reproduce the desired entropy values (in bits) for all $A_S$ with $S\subseteq\{w,y,z\}$.

Analogous definitions with $A_x$ in place of $A_w$ also define a distribution for $A_{ \{x,y,z\} }$ which reproduces the desired entropy values and restricts to the same marginal distribution of $A_{\{y,z\}}$ as the distribution of $A_{\{w,y,z\}}$. The product distribution between $A_w$ and $A_x$ defines a distribution of $A_{\{w,x\}}$ having the desired properties. This completes the definition of the marginal model.
\end{proof}

\begin{cor}
\label{nstcor}
The non-Shannon-type inequality~(\ref{nst}) detects the contextuality of some marginal models in the marginal scenario $\mathcal{M}_{ZY}$, although no Shannon-type inequality does so.
\end{cor}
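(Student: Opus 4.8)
The plan is to combine Lemma~\ref{fZY} with Proposition~\ref{submodnonc} applied to the marginal scenario $\mathcal{M}_{ZY}$. First I would invoke Lemma~\ref{fZY}: the partial polymatroid $f^{ZY}$ equals $S\mapsto H(A_S)$ for a genuine marginal model $P^{\mathcal{M}_{ZY}}$ on $\mathcal{M}_{ZY}$. Since $f^{ZY}$ violates the polymatroid analogue of~(\ref{nst}), and every non-contextual partial polymatroid is the restriction of a full polymatroid — which would satisfy all Shannon-type inequalities, in particular~(\ref{nst}) is \emph{not} one of them, so that route does not immediately work. Instead, the point is that~(\ref{nst}) is a valid inequality for entropic polymatroids (it bounds $\overline{\Gamma}_4^*$): if $P^{\mathcal{M}_{ZY}}$ were non-contextual in the sense of definition of a non-contextual marginal model, there would be a joint distribution $P$ on $A_w,A_x,A_y,A_z$ with $P_{|S}=P^{\mathcal{M}_{ZY}}_S$ for all $S\in\mathcal{M}_{ZY}$; then $S\mapsto H(A_S)$ computed from $P$ is an entropic polymatroid extending $f^{ZY}$, hence satisfies~(\ref{nst}), contradicting that $f^{ZY}$ already violates it on the subsets in $\mathcal{M}_{ZY}$. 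So $P^{\mathcal{M}_{ZY}}$ is contextual, and~(\ref{nst}) is what witnesses this.

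Second I would argue that no Shannon-type inequality detects this contextuality. By the discussion preceding figure~\ref{diagram}, a Shannon-type inequality detects the contextuality of a marginal model exactly when the associated partial polymatroid lies outside $\Gamma^{\mathcal{M}_{ZY}}$, i.e.\ when $f^{ZY}$ is a contextual \emph{partial polymatroid} in the sense of definition~\ref{defpp} and the preceding definition. So it suffices to show $f^{ZY}$ \emph{is} a non-contextual partial polymatroid, meaning it extends to a full polymatroid $f:2^{\{w,x,y,z\}}\to\R$. Here is where I would be slightly careful: Proposition~\ref{submodnonc} only gives a \emph{submodular} extension, not one satisfying nonnegativity and monotonicity. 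So the main obstacle is to actually exhibit, or otherwise guarantee the existence of, a full polymatroid extending $f^{ZY}$. The natural candidate is the entropic polymatroid coming from the explicit construction in the proof of Lemma~\ref{fZY}, but that construction only defines distributions for $A_{\{w,y,z\}}$, $A_{\{x,y,z\}}$ and $A_{\{w,x\}}$ separately and — precisely because the marginal model is contextual — these do \emph{not} glue to a joint distribution, so they do not directly give a full polymatroid either.

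The cleanest fix is to extend $f^{ZY}$ by hand to the two missing subsets $\{w,x,y\},\{w,x,z\},\{w,x,y,z\}$ (there are three subsets of $\{w,x,y,z\}$ not in $\mathcal{M}_{ZY}$). I would try the assignment suggested by ``throwing away'' the correlation between the $w$-block and the $x$-block, e.g.\ $f(\{w,x,y\}) = f(\{w,x,z\}) = 5$ and $f(\{w,x,y,z\}) = 5$ (these are the entropy values one gets from the four independent bits $\alpha_w,\alpha_x,\alpha_y,\beta$ together with treating $\alpha_z$ as a function, or more simply from $A_w=(\alpha_w,\beta)$, $A_x=\alpha_x$, $A_y=(\alpha_y,\beta)$, $A_z=\beta$, adjusting so that all prescribed values of $f^{ZY}$ are matched). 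Then I would check the basic inequalities~(\ref{shannonineqs}) for this extended $f$ — a finite, mechanical verification across all pairs $R\cup\{i\},R\cup\{j\}$ — to confirm it is a genuine polymatroid. Since it restricts to $f^{ZY}$ on $\mathcal{M}_{ZY}$, this shows $f^{ZY}\in\Gamma^{\mathcal{M}_{ZY}}$, hence no Shannon-type inequality can detect the contextuality of the marginal model of Lemma~\ref{fZY}. Combining the two halves gives Corollary~\ref{nstcor}.

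I expect the main obstacle to be purely the bookkeeping of finding the right values on the three missing subsets so that \emph{all} of~(\ref{shannonineqs}) hold simultaneously; conceptually everything is forced, but one must resist the temptation to reuse the Lemma~\ref{fZY} construction verbatim (which fails to produce a full polymatroid) and instead produce a single coherent polymatroid extension — alternatively, one could appeal to a general extension theorem for polymatroids defined on a simplicial complex, if available, but an explicit table is the most self-contained route.
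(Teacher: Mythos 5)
Your first half is exactly the paper's argument: Lemma~\ref{fZY} realizes $f^{ZY}$ as the entropies of a marginal model, and since~(\ref{nst}) is valid on $\overline{\Gamma}_4^*$, any joint distribution reproducing these marginals would give an entropic extension of $f^{ZY}$ satisfying~(\ref{nst}), contradicting the violation; so~(\ref{nst}) witnesses contextuality. For the second half you take a genuinely different route. The paper's proof is computational: it uses the Fourier--Motzkin machinery of section~\ref{computational} to obtain all $67$ facet inequalities of $\Gamma^{\mathcal{M}_{ZY}}$ and checks that $f^{ZY}$ satisfies every one of them. You instead propose to exhibit, by hand, a single polymatroid extension of $f^{ZY}$ to all of $2^{\{w,x,y,z\}}$; this is more elementary and self-contained (no software, independently verifiable), at the cost of yielding less information --- the paper's computation also produces the complete list of tight Shannon-type inequalities for $\mathcal{M}_{ZY}$, whereas your argument only certifies membership of this one point in $\Gamma^{\mathcal{M}_{ZY}}$, which is all the corollary needs. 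You are also right to note that Proposition~\ref{submodnonc} cannot be invoked here, since it only yields a submodular extension, not a polymatroid.

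The one concrete error is your candidate assignment: the value $5$ on the missing sets is impossible. Submodularity with $R=\{y\}$ and $i,j=w,x$ reads $f(\{y\})+f(\{w,x,y\})\leq f(\{w,y\})+f(\{x,y\})=6$, forcing $f(\{w,x,y\})\leq 4$, while monotonicity over $\{w,x\}$ forces $f(\{w,x,y\})\geq 4$; hence $f(\{w,x,y\})=4$ is forced, and likewise $f(\{w,x,z\})=4$ (via $R=\{z\}$) and $f(\{w,x,y,z\})=4$ (via $R=\{w,x\}$, $i,j=y,z$, combined with monotonicity over $\{w,y,z\}$). With the forced values $4,4,4$ a finite check of~(\ref{shannonineqs}) goes through: each $f(\{w,x,y,z\}\setminus\{i\})\leq f(\{w,x,y,z\})$ is $4\leq 4$, and every local submodularity instance holds (with slack $0$ or $1$). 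So $f^{ZY}$ is indeed a non-contextual partial polymatroid and your plan works once the numbers are corrected; as written, the values $5,5,5$ would have made the verification fail rather than merely require bookkeeping.
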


\begin{proof}
The first part is clear by the lemma since $f^{ZY}$ violates~(\ref{nst}). For the second part, we have used the methods and software described in section~\ref{computational} to compute all the facet inequalities of $\Gamma^{\mathcal{M}_{ZY}}$. Since the partial polymatroid $f^{ZY}$ has turned out to violate none of these $67$ inequalities, it is non-contextual as a partial polymatroid.
\end{proof}

In particular, this shows that~(\ref{nst}) is indeed a non-Shannon-type inequality.

We now extend corollary~\ref{nstcor} to the continuous-variable case. The original proof of~(\ref{nst}) from~\cite{ZY} also works in the continuous-variable case, so that
\begin{align}
\label{nstd}
\begin{split}
-4h(A_w & A_y A_z) - h(A_x A_y A_z) - h(A_w A_x) + 3 h(A_w A_y) + 3 h(A_w A_z) \\
&+ h(A_x A_y) + h(A_x A_z) + 3 h(A_y A_z) - h(A_w) - 2 h(A_y) - 2 h(A_z) \geq 0 
\end{split}
\end{align}
is a valid non-Shannon-type inequality for differential entropy. Alternatively, this inequality can also be deduced from the results of Chan~\cite[Thm. 2]{Chan} on the relation between entropic inequalities for discrete and continuous variables.

We now claim that the partial polymatroid $f^{ZY}$ can also be realized as the collection of entropies of a continuous-variable marginal model. In order to do so, the bits in the proof of lemma~\ref{fZY} should be replaced by independent copies of a continuous variable with uniform distribution on $[0,2]$ (which has a differential entropy of $1$). This yields the desired continous-variable marginal model on $\mathcal{M}_{ZY}$. Thanks to proposition~\ref{submodnonc}, we know that no Shannon-type inequality for differential entropy can detect its contextuality, although~(\ref{nstd}) does.

\begin{rem}
The Fourier-Motzkin elimination approach of section~\ref{computational} can easily be amended so as to deal with some non-Shannon-type inequalities, too. Instead of only using the basic inequalities as the initial input to the Fourier-Motzkin solver, one can additionally provide a finite list of non-Shannon-type inequalities to begin with, and the Fourier-Motzkin solver will then also take those into account while deriving entropic inequalities applicable to a marginal scenario. However, for the practical computations that we have done~\cite{ent_approach}, this has not improved the results.
\end{rem}

\section{Open problems}
\label{conclusion}

We would now like to mention some relevant questions, which are, to the best of our knowledge, still open.

Partial polymatroids:

\renewcommand{\labelenumi}{\arabic{enumi}.}
\begin{enumerate}
\item Vorob'ev~\cite{Vorob} has found a complete characterization of those marginal scenarios in which all marginal models are non-contextual. The analogous question in the polymatroid case is this: how can one characterize the class of all marginal scenarios in which all partial polymatroids are non-contextual? Is the answer the same as in~\cite{Vorob}?
\item Proposition~\ref{cycletight} implies that for the marginals scenarios $\mathcal{C}_n$, recognizing (non-)contextuality of a partial polymatroid can be done in polynomial time. What about the complexity of this problem for other families of marginal scenarios? Is the general case even in $NP$?
\item Under which conditions on $\mathcal{M}$ can every non-contextual polymatroid be realized as an entropic polymatroid? In other words, for which $\mathcal{M}$ can one show that all entropic inequalities are Shannon-type, like in corollary~\ref{nononShannon}? For example, does this also hold when $\mathcal{M}$ is a complete graph instead of a cycle graph? If so, then this would mean that there are no non-Shannon-type entropic inequalities in which each term is of the form $H(A_i)$ or $H(A_iA_j)$.
\end{enumerate}

Entropic inequalities and marginal problems:

\begin{enumerate}[resume]
\item We have seen that taking entropies can turn contextual marginal models into contextual partial polymatroids. Which contextual polymatroids can arise in this way?
\item Upon fixing a finite set of possible outcomes for each variable, the marginal models in a given marginal scenario form a convex polytope. All the examples we have found so far~\cite{ent_approach} have the property that taking entropies of an extreme point of this polytope maps it to a non-contextual partial polymatroid. Is this always the case?
\item When writing the entropic inequalities~(\ref{jecycle}) in terms of mutual information, the resulting inequalities
$$
\sum_{j\neq i} I(A_j:A_{j+1}) - I(A_i:A_{i+1}) \leq \sum_{j\neq i,i+1} H(A_j)
$$
bear a great similarity to the inequalities derived in~\cite{AQ}, which detect contextuality in the $n$-cycle scenario for binary random variables on the level of probabilities rather than entropies. Does this similarity extend to other marginal scenarios? If so, this would provide an interesting alternative to the computationally costly Fourier-Motzkin elimination for generating Shannon-type entropic inequalities detecting the contextuality of partial polymatroids.
\end{enumerate}

\bibliographystyle{plain}
\bibliography{entropic_inequalities}

\end{document}